\newtheorem{theorem}{Theorem}
\newcommand{\suchthat}{\;\ifnum\currentgrouptype=16 \middle\fi|\;}
\newtheorem{corr}{Corollary}
\title{Dependability Theory-based Statistical QoS Provisioning of Fluid Antenna Systems}
\author{
\IEEEauthorblockN{Irfan~Muhammad, 
                                Priyadarshi~Mukherjee, \textit{Senior Member, IEEE}, 
                                Wee Kiat New, \textit{Member, IEEE},
                                Hirley~Alves, \textit{Member, IEEE}, 
                                Ioannis~Krikidis, \textit{Fellow, IEEE}, and 
                                Kai-Kit Wong, \textit{Fellow, IEEE}}
\vspace{-6mm}

\thanks{This work was supported by 6G Flagship (Grant Number 369116) funded by the Research Council of Finland and Tauno Tönning Foundation. Preliminary results of this work have been presented at the IEEE International Workshop on Signal Processing Advances in Wireless Communication, July 2022, Oulu, Finland \cite{spawc}, where only closed form expression for the LCR over Rayleigh fading was presented.}
\thanks{I. Muhammad and H. Alves are with the Centre for Wireless Communications, University of Oulu, 90570 Oulu, Finland (e-mail: $\rm \{irfan.muhammad,Hirley.Alves\}@oulu.fi$).}
\thanks{P. Mukherjee is with the Advanced Computing \& Microelectronics Unit, Indian Statistical Institute, Kolkata, India (e-mail: $\rm priyadarshi@ieee.org$).}
\thanks{I. Krikidis is with the Department of Electrical and Computer Engineering, University of Cyprus, Nicosia 1678 (e-mail: $\rm krikidis@ucy.ac.cy$).} 
\thanks{W. K. New is with the Department of Electronic and Electrical Engineering, University College London, WC1E 7JE, London, United Kingdom (e-mail: $\rm aven.wknew@yahoo.com$).} 
\thanks{K. K. Wong is affiliated with the Department of Electronic and Electrical Engineering, University College London, Torrington Place, WC1E 7JE, United Kingdom and he is also affiliated with Yonsei Frontier Lab, Yonsei University, Seoul, Korea (e-mail: $\rm kai\text{-}kit.wong@ucl.ac.uk$).}
}
\date{July 2025}
\begin{document}

\maketitle

\begin{abstract}
Fluid antenna systems (FAS) have recently emerged as a promising technology for next-generation wireless networks, offering real-time spatial reconfiguration to enhance reliability, throughput, and energy efficiency. Nevertheless, existing studies often overlook the temporal dynamics of channel fading and their implications for mission-critical operations. In this paper, we propose a dependability-theoretic framework for statistical quality-of-service (QoS) provisioning of FAS under finite blocklength (FBL) constraints. Specifically, we derive new closed-form expressions for the level-crossing rate (LCR) and average fade duration (AFD) of an $N$-port FAS over Nakagami-$m$ fading channels. Leveraging these second-order statistics, we define two key dependability metrics such as mission reliability and mean time-to-first-failure (MTTFF), to quantify the probability of uninterrupted operation over a defined mission duration. We further extend the classical effective capacity (EC) concept to incorporate mission reliability in the FBL regime, yielding a mission EC (mEC). To capture energy efficiency under bursty traffic and latency constraints, we also develop the mission effective energy efficiency (mEEE) metric and formulate its maximization as a non-convex fractional optimization problem. This problem is then solved via a modified Dinkelbach's method with an embedded line search. Extensive simulations uncover critical trade-offs among port count, QoS exponent, signal-to-noise ratio, and mission duration, offering insights for the design of ultra-reliable, low-latency, and energy-efficient industrial internet-of-things (IIoT) systems.
\end{abstract}

\begin{IEEEkeywords}
Fluid antenna system (FAS), spatial correlation, dependability theory, level crossing analysis, finite blocklength, mean time-to-first-failure, mission reliability, mission effective capacity, mission effective energy efficiency.
\end{IEEEkeywords}

\IEEEpeerreviewmaketitle

\section{Introduction}
\IEEEPARstart{S}{ixth}-generation (6G) wireless networks aim to revolutionize communications by delivering ultra-high capacity, ultra-low latency, massive connectivity, and seamless integration of sensing and artificial intelligence (AI). Amongst the key enabling technologies, the fluid antenna system (FAS) has emerged as a promising candidate \cite{KiatSurvey,wu2024fluid,Lu-2025,FAS-LLM}. Originally introduced by Wong {\em et al.}~in \cite{I22_wong2020perflim,wong2020fluid}, FAS aims to provide an integrated physical-layer system exploiting reconfigurable antennas such as liquid antennas \cite{I24_shen2024design}, reconfigurable pixels \cite{I26_zhang2024pixel}, metamaterials \cite{Liu-express2025} and etc.~for shape and position flexibility to empower the physical layer of wireless communications. This adaptability enhances system flexibility, reliability, spatial diversity, and interference resilience, making FAS particularly well-suited for Industrial Internet of Things (IIoT).

Since the first papers in \cite{I22_wong2020perflim,wong2020fluid}, the theoretical performance of FAS has been extensively studied, with \cite{Khammassi-2023} providing an eigenvalue-based channel model for characterizing the spatial correlation over the ports. Then in \cite{New-2023}, the authors analyzed the diversity order of a point-to-point FAS Rayleigh fading channel while \cite{Vega-2023} investigated the performance in Nakagami channels using a simpler spatial correlation model. Moreover, FAS in $\alpha$-$\mu$ fading channels was also examined in \cite{Alvim-2024}. The mathematical tractability in analyzing the FAS channels while maintaining accuracy was achieved by using the spatial block-correlation model in \cite{H7_Espinosa2024Anew}. Copulas were also illustrated to be useful in the performance analysis for FAS \cite{Ghadi-2023}, which works for any fading distributions. In \cite{Psomas-dec2023}, the continuous FAS was modelled and analyzed. Also, the diversity and multiplexing trade-off for multiple-input multiple-output (MIMO) channels with FAS at both ends was characterized in \cite{Kiat_TWC}. There have also been attempts to synergize FAS with other technologies such as multiuser MIMO \cite{HaoXu}, non-orthogonal multiple access \cite{KiatLetter}, integrated sensing and communications \cite{Jiaqi,H17_wang2024fluid}, over-the-air federated learning \cite{ahmadzadeh2025}, and millimeter-wave communications \cite{Leila_Globecom}. Channel estimation for FAS channels has also been studied by different approaches \cite{Skouroumounis-tcom2023,xu2024channel,New-2025ce,zhang2023successive}.

Compared to conventional communication scenarios, IIoT requires resilient connectivity, imposing stringent requirements on wireless systems \cite{mahmood2024resilience}. These systems must support ultra-high reliability, low latency, and scalable connectivity to enable mission-critical operations such as smart manufacturing, autonomous logistics, and real-time monitoring. Traditional wireless systems with fixed-position antennas often struggle to meet these demands, especially in dynamic industrial environments. FAS, with its ability to dynamically reconfigure the antenna's position in real time, offers a compelling solution by switching to spatial locations with greater signal strength. FAS can enhance resilience in the network through diversity, which in turn improves reliability and throughput.

Mission-critical IIoT applications must be resilient, meaning that they can withstand or recover from failures during their mission duration \cite{mahmood2024resilience}. One critical component of resilience is reliability, which must also be guaranteed instantaneously and over a defined operational period. This prompts the introduction of the concept of mission reliability, that is, the probability that communication remains uninterrupted throughout the mission duration~\cite{missionReliability_2018}. Closely related to mission reliability is the mean time-to-first-failure (MTTFF), a quantity that measures the expected duration before a communication breakdown or outage occurs. A high MTTFF is therefore essential for sustaining continuous operations such as closed-loop control and autonomous system coordination. Traditional wireless systems often fall short in maintaining high mission reliability and MTTFF due to their static configurations. In contrast, the dynamic adaptability of FAS enables real-time spatial reconfiguration to track favourable channel conditions, effectively extending the MTTFF and enhancing overall mission reliability.

Meanwhile, the sporadic nature of uncertain traffic arrival challenges the provision of ultra-reliable low-latency communication (URLLC) \cite{survay_traffi_Models}. URLLC applications generate traffic that demands time-varying wireless channel service assurances. This may result in quality-of-service (QoS) violations due to random environmental variation. The random nature of the wireless medium makes it crucial to guarantee the deterministic QoS for delay-sensitive services. Therefore, the authors in the seminal work \cite{ART:DAPENGwu-2003} introduced effective capacity (EC) as a link-layer model to address delay constraints. EC determines the maximum arrival rate that a random wireless channel can support (service rate) using link-level QoS metrics, such as delay outage probability and maximum delay bound, and captures physical and link-layer characteristics to ensure that statistical QoS requirements are met. Hence, EC is considered a measure of throughput supported by a random wireless channel with a specific latency constraint. Thus, EC, which combines capacity and latency, is a suitable metric for evaluating the performance of delay-constrained systems. 

EC has been widely studied in several scenarios in the last decade to examine the trade-off between reliability, security, latency, and energy efficiency (see \cite{Irfan_Secure_SC, ART:EC_survey-2019} and references therein). Unfortunately, the works related to EC only rely on the average signal-to-noise ratio (SNR), which provides limited insights into fluctuations in fading channels. The authors in~\cite{Irfan_mEC} filled this gap by utilizing the dependability theory and introducing the mission EC (mEC) metric, which not only relies on average SNR but also aids in identifying failure events in a wireless fading channel. The mEC metric helps us introduce the \emph{mission effective energy efficiency (mEEE)} by leveraging second-order statistics, such as the level crossing rate (LCR) and average fade duration (AFD).

The fading nature of wireless channel is a major hurdle to reliability and energy efficiency. Two key metrics, namely LCR and AFD, are useful to characterize such dynamic behaviour. LCR specifies the number of times the signal envelope exceeds a specified threshold level in the negative direction. At the same time, AFD quantifies the average time of fade during which the signal stays below that threshold. Both metrics are significant in analyzing the temporal characteristics of deep fades, which directly affect latency and outage events. However, from a design perspective, frequent fading events lead to higher energy consumption, resulting in a fundamental trade-off between energy use, reliability, and delay in power-constrained wireless systems. To circumvent this, the concept of effective energy efficiency (EEE) emerged, which integrates energy consumption with other key performance indicators such as reliability, delay, and throughput. 

\subsection{Contributions}
The current works on FAS and EEE lack consideration of the detrimental impact of fading in the wireless channel and do not provide insights into the time at which the first failure occurs. Inspired by this, we aim to fill the gap by leveraging the dependability theory, which helps us address this crucial factor and advance the mathematical framework in the context of EC. Our primary contributions are outlined below.

\begin{itemize}
\item We derive novel closed-form expressions for the LCR and AFD of an $N$-port FAS operating over Nakagami-$m$ fading. These results capture the impact of port count and fading severity on temporal channel variation.

\item Building on the second-order fading statistics, we revise the mission reliability metric \cite{Irfan_mEC}, which captures the probability of uninterrupted operation over a mission duration, to Nakagami-$m$ fading and FAS. 

\item We then extend the EC framework to FAS in the finite-blocklength (FBL) regime, incorporating both delay-outage constraints and mission reliability, thereby quantifying the sustainable arrival rate that guarantees failure-free operation over some finite duration $\mathrm{\Delta T}$.

\item Furthermore, we define mEEE as the ratio of mission EC to total energy consumption, refining conventional linear power models by accounting for bursty arrival patterns and idle/transmit probabilities. This novel dependability-aware metric quantifies energy efficiency up to the first failure event.

\item We formulate the mEEE maximization problem under mission reliability and latency constraints, and solve it via a modified Dinkelbach's transform coupled with a generic line search subroutine. Our approach yields an efficient iterative algorithm with provable convergence.

\item Finally, we conduct extensive Monte Carlo simulations and demonstrate key trade-offs between mission duration, reliability target, QoS exponent, antenna port count, and average SNR. The results provide actionable design guidelines for deploying FAS in ultra-reliable, low-latency, energy-constrained IIoT networks.

\end{itemize}

\noindent{\textbf{Notations}}---The probability density function (PDF) and cumulative distribution function (CDF) of a given random variable $X$ are denoted as $f_X(x)$ and $F_X(x)$, respectively, and $\mathbb{E}\left[\cdot\right]$ denotes the expectation operator. Besides, $e$ shows the Euler's number and $Q^{-1}(\cdot)$ indicates the inverse of the $Q$-function. The $I_m(\cdot)$ is the $m$th-order modified Bessel function of the first kind.  $Q_m(\cdot,\cdot)$ is the $m$th-order Marcum $Q$-function, and $\gamma(\cdot,\cdot)$, $\Gamma(\cdot,\cdot)$ are the lower and upper incomplete gamma functions \cite[6, 6.5]{BOOK:ABRAMOWITZ-DOVER03}, and $J_0(\cdot)$ refers to the zero-order Bessel function of the first kind, respectively.

 \section{System Model}\label{sc:system_model}
\begin{figure}[!t]
\centering\includegraphics[width=0.50\textwidth]{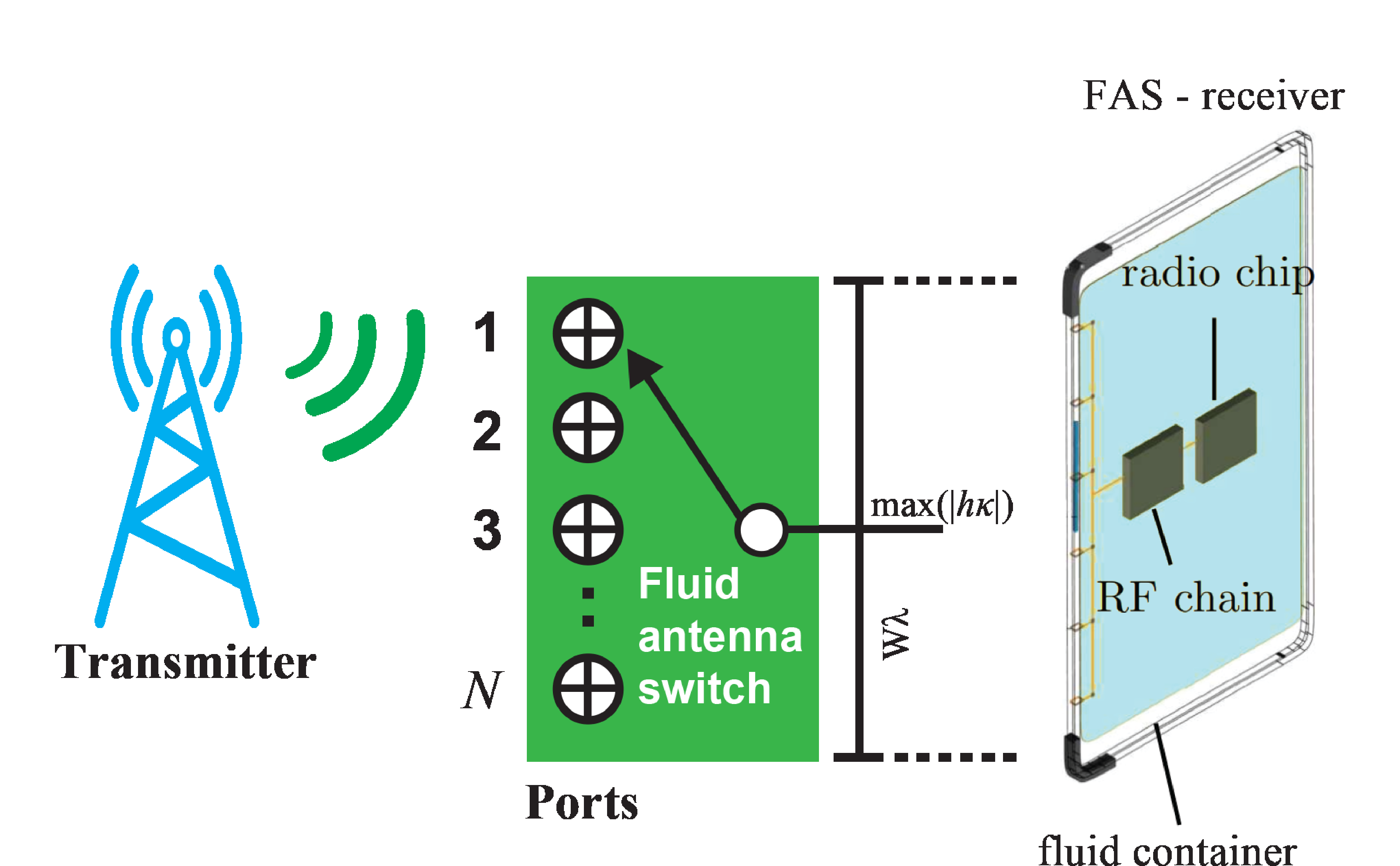}
\caption{The considered topology consists of a single fixed-position antenna transmitter and a FAS-based mobile device as receiver. In this figure, the FAS is illustrated as using a liquid-based antenna but in practice, it can be implemented using other technologies such as reconfigurable pixels.}\label{fig:model}
\vspace{-2mm}
\end{figure}

Our model consists of a simple point-to-point topology, where an energy-constrained IIoT device with a single fixed-position antenna transmits short packets to an $N$-port FAS-based common aggregator as illustrated in Fig.~\ref{fig:model}. We consider a quasi-static block fading channel, meaning that the fading coefficient remains constant within a transmission block but fluctuates independently between different fading blocks. Additionally, we consider the ideal scenario of perfect channel state information (CSI) availability at the aggregator, which establishes an upper performance limit for the system and allows us to focus our discussions on the dependability issues. Indeed, evaluating the cost of CSI at the receiver (CSIR) within the finite blocklength regime presents an interesting aspect and could serve as a potential extension of our work. Hence, the IIoT user transmits with a fixed rate due to unknown channel states. Moreover, Fig.~\ref{fig:model} illustrates that a standard FAS-based receiver operates as a position-flexible antenna system on a single radio frequency (RF) chain. The $N$ ports are uniformly distributed in a linear region of $W \lambda$, where $W \in \mathbb{R}^+$ and $\lambda$ represents the carrier wavelength. The antenna can instantly change positions among these ports~\cite{wong2020fluid}. As shown in Fig.~\ref{fig:model}, the first port serves as a reference, and the antenna is always switched to the port that offers the best channel condition.
 
 \subsection{Channel Model}
To ensure that the reliability and latency analysis remains tractable, we consider the simplified FAS channel model.\footnote{A more accurate analysis is possible with the analytical results derived in this paper by employing the block correlation model in \cite{H7_Espinosa2024Anew}. However, this extension is reserved for future work as it requires a distinct approach.} The channels at these ports are represented  as~\cite{wong2020fluid}
 \begin{equation}\label{eq1}
 \begin{cases}{h_1} = {x_0} + j{y_0} \\
 \vdots \\{h_k} = \left( {\sqrt {1 - \mu _k^2} {x_k} + {\mu _k}{x_0}} \right) \\ \qquad + j\left( {\sqrt {1 - \mu _k^2} {y_k} + {\mu _k}{y_0}} \right){\text{ for }}k = 2, \dots ,N, \end{cases} 
 \end{equation}
where $x_0,\dots,x_N$ and $y_0,\dots,y_N$ are the in-phase and quadrature components of the complex channel, respectively, and ${\mu_k}$ $\forall k$ denotes the spatial correlation among the channels \cite{wong2020fluid}
\begin{equation}\label{eq2}
\mu_k = {J_0}\left( {\frac{{2\pi (k - 1)}}{{N - 1}}W} \right),\quad {\text{for }}k = 2, \dots ,N.
\end{equation}

\begin{table}[!t]
\renewcommand{\arraystretch}{1.5}
\caption{Relevant abbreviations and symbols}\label{tab:1}
\centering
\begin{tabular}{c||l}
    \hline
    \textbf{Symbol}  &   \textbf{Definition}\\
    \hline\hline
    mEEE & Mission Effective Energy Efficiency\\
    \hline
    CSI & Channel State Information\\
    \hline
    $\mathrm{P_c}$ & Hardware Power Dissipated\\
    \hline
    $f_D$ & Doppler Frequency\\
    \hline
    $\vartheta$ & Drain Efficiency of Power Amplifier\\
    \hline
    MTTFF & Mean Time-to-First-Failure\\
    \hline
    $\mathrm{\Delta T}$ & Mission Duration\\
    \hline
    $\Upsilon$ & Failure Rate\\
    \hline
    $n$ & Blocklength \\
    \hline
    mEC & Mission Effective Capacity\\
    \hline
    $\Phi$ & Average Signal-to-Noise Ratio \\
    \hline
    $N$ & Number of Ports \\
    \hline
    $\epsilon$ & Target Error Probability \\     
    \hline
    $\mathrm{R_{M}}$ & Mission Reliability \\
    \hline
    $S$ & Source Burstiness \\
    \hline
    FBL & Finite Blocklength \\
    \hline
    R & Transmission Rate \\
    \hline
    r & Arrival Rate \\
    \hline
    $\mathrm{\overline r_{max}}$ & Maximum Average Arrival Rate\\
    \hline
    \end{tabular}
    \vspace{-4mm}
\end{table}


Thus, the received signal at the $k$-th port of the FAS is
\begin{equation}
    y_{k}=h_{k} s + w,
\end{equation}
where $h_{k}$ is the complex channel co-efficient, $s$ represents the transmitted signal with $\mathbb{E} \{ |s|^2 \}=P_t$, and $w_k$ is the additive white Guassian noise (AWGN). Moreover, with a little abuse of notations, we denote $w_k=w~\forall k$ and $w \sim \mathcal{N} \left(  0,\sigma^2_{w}\right)$. Here, we also consider a generalized Nakagami-$m$ fading scenario, i.e., $\alpha_k=|h_k|$ is a random variable following the Nakagami-$m$ distribution given by \cite{papoulis}
\begin{equation}\label{nakadefsr}
f_{\alpha_k}(\alpha)=\frac{2m^m\alpha_k^{2m-1} \exp\left(-\frac{m\alpha_k^2}{\sigma_k^2}\right)}{\Gamma(m)\sigma_k^{2m}},~ \forall \alpha \geq 0,
\end{equation}
with $\mathbb{E}\{\alpha_k^2\}=\sigma_k^2$ and $m \geq 1$ controls the severity of the amplitude fading. Furthermore, it is assumed that the FAS always chooses the port with the best channel. i.e.,
\begin{equation}\label{eq3}
\alpha_{\rm FAS} = \max \left\{ \alpha_1,\alpha_2, \ldots ,\alpha_N \right\}.
\end{equation}

Due to the proximity of the ports in an FAS, it is understood that the spatial correlation significantly influences the performance of port selection. Accordingly, the joint PDF and the joint CDF of the spatially correlated Nakagami-$m$ random variables, $\alpha_1,\dots,\alpha_N$, are, respectively, expressed as \cite{FAS_NAKAGAMI}
\begin{align}  \label{pdf_FAS_Nakagami}
&f_{\alpha_1, \dots ,\alpha_N}\left( x_1, \dots ,x_N \right) \nonumber \\
&= \frac{2x_1^{2m - 1} m^m}{\Gamma (m)\sigma^{2m}} \exp\left(- \frac{mx_1^2}{\sigma^2}  \right)  \prod_{k=2}^N \frac{2mx_1^{1 - m}x_k^m}{\sigma^2\left( 1 - \mu _k^2 \right)\mu _k^{m - 1}} \nonumber \\
& \times\! \exp\!\!\left(\!- \frac{mx_k^2 + m x_1^2\mu _k^2}{\sigma^2\left( {1 - \mu _k^2} \right)}\! \right) I_{m-1} \left[ \frac{2m \mu_k x_1 x_k}{\sigma^2\left( 1-\mu_k^2 \right)} \right]
\end{align}
and
\begin{align} \label{cdf1}
&F_{\alpha_1, \dots,\alpha_N}(x_{1}, {\dots },x_{N}) \nonumber \\
&={\rm Prob}(\alpha_1 < x_{1}, {\dots },\alpha_N < x_{N}) \nonumber \\
&=\frac{2m^m}{\Gamma (m)\sigma^{2m}} \int_0^{X_1} x_1^{2m-1} \exp\!\!\left(\!\!-\frac{mx_1^2}{\sigma^2}\right) \nonumber \\
& \times\!\!\prod_{k=2}^N\!\! \left[ 1-Q_{\rm m}\left( \sqrt{\frac{2m\mu_k^2x_1^2}{\sigma^2\left( 1-\mu_k^2 \right)}} ,\sqrt {\frac{2mX_k^2}{\sigma^2\left( 1-\mu _k^2 \right)}}  \right) \right] {\rm d}{x_1}
\end{align}
for $x_{1}, {\dots },x_{N}\ge 0$. Note that mutual coupling has no impact on an FAS if only one antenna port is activated at a time and if liquid-based fluid antennas or mechanically movable antennas are used. Hence, \eqref{pdf_FAS_Nakagami} is not a conventional $N$-variate random variable, but a product of $N$ bi-variate Nakagami-$m$ random variables. Finally, the average received SNR is given by
\begin{equation}
    \Phi=\mathbb{E} \{ \alpha_{\rm FAS}^2 \}\frac{\mathbb{E} \{ |s|^2 \}}{\sigma_w^2}=\mathbb{E} \{ \alpha_{\rm FAS}^2 \}\frac{P_t}{\sigma_w^2}.
\end{equation}

\subsection{Short Packets Transmission}
Typically, communication in an industrial wireless setup occurs in short, compact messages. Hence, in this scenario, the use of finite blocklength codewords ensures high utilization, scalability, and high reliability~\cite{zhao2021IIoT}. It also reduces the encoding and decoding time, thus ensuring real-time data transmission required for industrial applications, such as automated control and emergency response. Therefore, we mainly focus on the transmission of short packets using the foundational work of \cite{Polyanskiy}, where the authors derived the achievable rate $\mathrm{R}$ to transmit short packets, relying on the target error probability $\epsilon\in [0,1]$, SNR, and the finite blocklength $n$. The received signal is considered to be successfully decoded with the probability of $1-\epsilon$ if the received power $|x|^2$ exceeds a specific threshold $\rho$. The threshold $\rho$ is set by the sensitivity of the receiver's hardware. Thus, the considered wireless channel is viewed as a repairable component following the Gilbert-Elliott model~\cite{zorzi1995accuracy}. Accordingly, we define the two states of this model by specifying the random ``channel state'' as
\begin{equation} \label{on-off-FBL}
U(t)_{\mathrm{FBL}}= \begin{cases} 
0, & \text {if}~x_\mathrm{th} < \rho,~\mbox{``OFF'', ``failed''},\\ 
1, & \text {if } x_\mathrm{th} \geq \rho,~\mbox{``ON'', ``operational''}. 
\end{cases}
\end{equation}
Here, $\rho=\sqrt{\frac{\eta}{\Phi}}$ and $\eta$ is the solution of \cite[(18)]{Onel_2018}, which is solved recursively for a given $\epsilon$ and $\mathrm{R}$ by defining a stopping criterion $\eta_{\Delta}$, e.g., until $\lvert \eta ^{(i)}-\eta ^{(i-1)} \rvert < \eta_{\Delta} $ is met, and using $\eta^{(0)}=\infty$. In this case, the main objective is to iterate over
\begin{equation} \label{shortPacket_thresold} 
{\eta }^{(j)}=2^{\mathrm{R}+\frac {1}{\sqrt {n}} \sqrt {1-\frac {1}{(1+ {\eta }^{(i-1)})^{2}}}\log _{2}e~Q^{-1}(\epsilon)}-1, 
\end{equation}
where the superscript $j$ is the iteration index. In~\eqref{on-off-FBL}, ``operational'' implies a scenario with an error probability lower than $\epsilon$; otherwise, the system is deemed to be in a ``failed'' state. Moreover, the transition rates between these two channel states, that is, the failure and repair rates, are denoted as $\Upsilon$ and $\beta$, respectively. While the first-order statistics, such as the PDF and the CDF, are useful in understanding the overall distribution of fades in the wireless channels, they do not consider the temporal distribution of these fades. However, the second-order statistics, such as the LCR and AFD, provide insight into the channel's behaviour and ameliorate our understanding of how the channel behaves over time. Accurately defining LCR and AFD is indispensable as they directly represent the rate at which a signal varies over time. Such information is necessary for designing and analyzing communication systems; it also helps in detecting failure events in URLLC networks.

\section{Performance Metrics}\label{sc:performance metrics for Finite Blocklength}
\subsection{LCR}
Now we characterise the LCR of an FAS, which facilitates evaluating the impact of the time-varying channel on the FAS performance. The LCR of a random process $x$ at the threshold $x_{\rm th}$ essentially gives the number of times per unit duration that $x$ crosses $x_{\rm th}$ in the negative (or positive) direction \cite{rice}. Mathematically, it is defined as
\begin{equation}\label{lcrdef}
L (x_{\rm th})=\int_0^{\infty}\dot{x}f_{\dot{X}X}(\dot{x},x=x_{\rm th})d\dot{x},
\end{equation}
where $\dot{x}$ is the time derivative of $x$ and $f_{\dot{X},X}(\dot{x},x)$ is the joint PDF of $x(t)$ and $\dot{x}(t)$ in an arbitrary instant $t$. Note that, for an isotropic scattering scenario, the time derivative of the signal envelope is Gaussian distributed with zero mean, irrespective of the fading distribution \cite{stuber}. As we aim to analyze the LCR of an FAS, we present the following theorem. 

\begin{theorem}\label{theo1}
The LCR for an $N$-port FAS is given by \eqref{lcrp} (see top of next page).
\end{theorem}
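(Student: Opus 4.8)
The plan is to apply the level-crossing formula \eqref{lcrdef} to the compound envelope $\alpha_{\rm FAS}(t)=\max\{\alpha_1(t),\dots,\alpha_N(t)\}$ of \eqref{eq3} and reduce the $N$-port crossing problem to single-port quantities that are already in closed form. The key starting observation is a sample-path decomposition: generically, a crossing of $\alpha_{\rm FAS}$ at level $x_{\rm th}$ occurs precisely when exactly one port, say port $j$, satisfies $\alpha_j=x_{\rm th}$ with nonzero derivative while every other port is \emph{strictly} below $x_{\rm th}$ (the events that two ports hit $x_{\rm th}$ simultaneously, or that the touching port is stationary, have zero measure in time). These events are mutually exclusive over $j$, whence
\begin{equation}
L(x_{\rm th})=\sum_{j=1}^{N}L^{(j)}(x_{\rm th}),
\end{equation}
where $L^{(j)}(x_{\rm th})=\int_0^{\infty}\dot{x}\, f_{\dot{\alpha}_j,\alpha_j,\{\alpha_k<x_{\rm th}\}_{k\neq j}}(\dot{x},x_{\rm th})\,d\dot{x}$ denotes the rate at which port $j$ crosses $x_{\rm th}$ while the remaining $N-1$ ports stay strictly below $x_{\rm th}$.

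Two structural facts then carry the argument. First, under isotropic scattering the envelope derivative $\dot\alpha_j$ of each port is zero-mean Gaussian; moreover, because the channel model \eqref{eq1} uses the weights $\sqrt{1-\mu_j^2}$ and $\mu_j$ with $1-\mu_j^2+\mu_j^2=1$ and all ports share the same Doppler statistics, the conditional variance of $\dot\alpha_j$ given all port amplitudes equals its unconditional, level-independent Nakagami-$m$ value $\sigma_{\dot\alpha}^2\propto f_D^2\sigma^2/m$. Hence $\dot\alpha_j$ is independent of $(\alpha_1,\dots,\alpha_N)$, and the crossing integral separates into the single-branch prefactor $\int_0^\infty\dot{x}\, f_{\dot\alpha_j}(\dot{x})\,d\dot{x}=\sqrt{\sigma_{\dot\alpha}^2/(2\pi)}$ multiplied by the joint ``occupancy'' density of the event $\{\alpha_j=x_{\rm th},\ \alpha_k<x_{\rm th}\ \forall k\neq j\}$. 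Second, by the product form of the joint PDF \eqref{pdf_FAS_Nakagami}, conditioning on the reference port $\alpha_1$ renders $\alpha_2,\dots,\alpha_N$ conditionally independent, each governed by the bivariate Nakagami-$m$ conditional law whose CDF is the factor $1-Q_m(\cdot,\cdot)$ appearing in \eqref{cdf1}.

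Assembling the terms is then bookkeeping. For $j=1$ (the reference port does the crossing) I set $\alpha_1=x_{\rm th}$; the occupancy density becomes $f_{\alpha_1}(x_{\rm th})\prod_{k=2}^{N}\big[1-Q_m(\cdot,\cdot)\big]$ with the Marcum arguments of \eqref{cdf1} taken at $x_1=x_{\rm th}$. For $j\ge 2$ the reference port is constrained only by $\alpha_1<x_{\rm th}$, so I integrate over $x_1\in[0,x_{\rm th}]$ against $f_{\alpha_1}(x_1)$, pick up the conditional density $f_{\alpha_j\mid\alpha_1}(x_{\rm th}\mid x_1)$ read off directly from \eqref{pdf_FAS_Nakagami}, and multiply by $\prod_{k=2,\,k\neq j}^{N}\big[1-Q_m(\cdot,\cdot)\big]$ with arguments taken at $x_1$. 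Substituting the explicit Nakagami-$m$ density \eqref{nakadefsr} for $f_{\alpha_1}$ and collecting the $\Gamma(m)$, $I_{m-1}(\cdot)$ and Marcum-$Q$ factors, the sum of the $N$ contributions is exactly \eqref{lcrp}.

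The step I expect to be the main obstacle is the second-order argument of the second paragraph: establishing that conditioning on the reference amplitude (and on the value of whichever port is doing the crossing) does not perturb the zero-mean Gaussian law of the relevant envelope derivative. This is what allows the single-branch prefactor $\sqrt{\sigma_{\dot\alpha}^2/(2\pi)}$ to be pulled out cleanly, so that the spatial correlation enters the final expression only through the static conditional CDFs. One must also take care that the mutually exclusive decomposition counts each crossing once and only once---hence the \emph{strict} inequalities $\alpha_k<x_{\rm th}$ for $k\neq j$---and that up- and down-crossing rates coincide by the symmetry of $\dot\alpha_j$. With these settled, the remainder is the routine algebraic consolidation of the $N$ conditional densities and CDFs into the stated closed form.
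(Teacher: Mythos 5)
Your proposal is correct and follows essentially the same route as the paper: the paper's Appendix A likewise decomposes the crossing rate of $\alpha_{\rm FAS}$ into a sum over ports (citing the standard joint PDF formula for the SC output and its derivative, which encodes your sample-path argument), pulls out the level-independent Gaussian-derivative prefactor $\sqrt{\pi/(2m)}\,\sigma f_D$, and uses the bivariate product structure of \eqref{pdf_FAS_Nakagami} to turn the conditional constraints $\alpha_k<x_{\rm th}$ into the Marcum-$Q$ factors, treating the reference-port term and the $i\ge 2$ terms exactly as you do. The conditional-independence-of-the-derivative point you flag is handled there in the same way, by invoking the isotropic-scattering property that the envelope derivative is zero-mean Gaussian with variance $\pi^2\sigma^2 f_D^2/m$ independent of the envelope.
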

\begin{figure*} [t]
\begin{align}  \label{lcrp}
&L(x_{\rm th})=\sqrt{\frac{2 \pi}{m}} {\sigma} f_D \Biggl\{ \frac{{x_{\rm th}^{2m - 1}{m^m}}}{{\Gamma (m)\sigma^{2m}}}{\exp\!\!\left(\!\! - \frac{{x_{\rm th}^2m}}{{\sigma^2}} \!\! \right)} \!\!\prod _{k=2}^N \!\!{\left[ \!{1\! - \!{Q_{\text{m}}}\!\!\left(\!\! {\sqrt {\frac{{2m\mu _k^2x_{\rm th}^2}}{{\sigma^2\left( {1 - \mu _k^2} \right)}}},\sqrt {\frac{{2m  x_{\rm th}^2}}{{\sigma^2\left( {1 - \mu _k^2} \right)}}}} \right)} \right]}\!\! + \!\!
\sum_{i=2}^N \frac{2m x_{\rm th}^m\sigma^{m - 1}\exp\!\!\left(\!\! - \frac{mx_{\rm th}^2 }{\sigma^2\left( {1 - \mu _i^2} \right)}\right) }{\sigma^{m+1}\left( 1 - \mu _i^2 \right)\mu _i^{m - 1}} \nonumber \\
&\times\!\!\bigintsss_0^{x_{\rm th}}\frac{2x_1^{3m - 2} m^m \exp\!\!\left( \!\!- \frac{m x_1^2}{\sigma^2\left( {1 - \mu _i^2} \right)}\!\right)}{\Gamma (m)\sigma^{2m}}\!I_{m-1}\!\!\left[ \frac{2m \mu_i x_1 x_{\rm th}}{\sigma^2\left( 1-\mu_i^2 \right)} \right] \!\!\prod _{\substack{k=2 \\ k \neq i}}^N \!\! {\left[ {1 - {Q_{\text{m}}}\!\!\left( {\sqrt {\frac{{2m\mu _k^2 }}{{\sigma^2\left( {1 - \mu _k^2} \right)}}}x_1 ,\sqrt {\frac{{2m}}{{\sigma^2\left( {1 - \mu _k^2} \right)}}} x_{\rm th}} \right)} \right]} dx_1\Biggr\}
\end{align}
\hrule
\end{figure*}

\begin{proof}
See Appendix \ref{app1}.
\end{proof}

We observe from \eqref{lcrp} that the LCR is a function of spatial correlation, the number of ports, the decision threshold, and the maximum Doppler frequency of the channel. This LCR of an FAS is different from that of a conventional selection combining (SC)-based receiver because of the unique PDF of the channels at the $N$ ports (see \eqref{pdf_FAS_Nakagami}) and also the aspect of the associated spatial correlation. For completeness, we consider the following two extreme cases: $\mu_k=0,1$, for $k=1,\dots,N$ and also the LCR $L(x_{\rm th})$ corresponding to $\mu_k=0~\forall k$, i.e., for a no spatial correlation scenario, given below.

\begin{corr}  \label{cor1}
For the scenario without spatial correlation, i.e., $\mu_k=0~\forall k$, $L(x_{\rm th})$ is given by
\begin{equation}\label{coin}
L(x_{\rm th})\!=\!\frac{\!\!\sqrt{2\pi} f_D N\gamma\!\left(\!m,\!\frac{m x_{\rm th}^2}{\sigma^2}\!\! \right)^{\!\!N-\!1}\!\! {x_{\rm th}}^{\!2m-\!1} \!\exp\!\!\left(\!-\frac{m x_{\rm th}^2}{\sigma^2}\! \right) }{m^{-m+\!\frac{1}{2}}\Gamma(m)^{N}\sigma^{2m-1}}. 
\end{equation}
\end{corr}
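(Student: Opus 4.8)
The plan is to exploit the fact that setting $\mu_k=0$ for every $k$ decouples the ports. From \eqref{eq1} --- equivalently, from \eqref{pdf_FAS_Nakagami}, where each bivariate factor in the product over $k$ degenerates to a Nakagami-$m$ marginal as $\mu_k\to0$ --- the port envelopes $\alpha_1,\dots,\alpha_N$ become independent and identically distributed Nakagami-$m$ processes, so $\alpha_{\rm FAS}=\max_k\alpha_k$ in \eqref{eq3} is the maximum of $N$ i.i.d.\ processes. I would then invoke the classical order-statistic level-crossing identity: a downcrossing of $\max_k\alpha_k$ through $x_{\rm th}$ happens precisely when one branch, say $\alpha_j$, downcrosses $x_{\rm th}$ while the other $N-1$ branches all lie below $x_{\rm th}$; summing the Kac--Rice integrand over $j$ and using the i.i.d.\ property gives
\[
L(x_{\rm th})=N\,L_1(x_{\rm th})\,\big[F_{\alpha}(x_{\rm th})\big]^{N-1},
\]
where $L_1$ is the single-branch Nakagami-$m$ LCR and $F_{\alpha}$ the Nakagami-$m$ CDF. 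This coincides with the known LCR of $N$-branch selection combining over i.i.d.\ Nakagami-$m$ fading, consistent with the remark following Theorem~\ref{theo1}.

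It then remains only to insert the two standard closed forms. Integrating the density \eqref{nakadefsr} with the substitution $u=m\alpha^2/\sigma^2$ gives $F_{\alpha}(x_{\rm th})=\gamma\!\big(m,\,m x_{\rm th}^2/\sigma^2\big)/\Gamma(m)$. For $L_1$, evaluating \eqref{lcrdef} under the isotropic-scattering assumption noted alongside \eqref{lcrdef} (the envelope derivative is zero-mean Gaussian, its positive-part mean proportional to $f_D$) gives $L_1(x_{\rm th})=\sqrt{2\pi}\,m^{m-1/2}\,f_D\,x_{\rm th}^{2m-1}\,e^{-m x_{\rm th}^2/\sigma^2}/(\Gamma(m)\,\sigma^{2m-1})$; equivalently, this is the first bracketed term of \eqref{lcrp} with the correlation product deleted. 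Substituting both into $N L_1[F_{\alpha}]^{N-1}$, writing $m^{m-1/2}=1/m^{-m+1/2}$ and collecting the $\Gamma(m)$ and $\sigma$ powers, reproduces \eqref{coin} exactly.

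The more direct route --- the one in keeping with the statement's phrasing as a corollary --- is to specialize \eqref{lcrp} itself. Setting $\mu_k=0$: in the first bracketed term every Marcum $Q$-function has a vanishing first argument, and since $Q_m(0,b)=\Gamma(m,b^2/2)/\Gamma(m)$ one has $1-Q_m(0,b)=\gamma(m,\,m x_{\rm th}^2/\sigma^2)/\Gamma(m)=F_{\alpha}(x_{\rm th})$, so the product collapses to $[F_{\alpha}(x_{\rm th})]^{N-1}$ and the term becomes $L_1(x_{\rm th})[F_{\alpha}(x_{\rm th})]^{N-1}$. In each summand of $\sum_{i=2}^N$ one sets $1-\mu_i^2\to1$, uses $I_{m-1}(z)\sim(z/2)^{m-1}/\Gamma(m)$ as $z\to0^+$ to cancel the apparently singular factor $\mu_i^{-(m-1)}$, notes that the inner product over $k\neq i$ tends to $[F_{\alpha}(x_{\rm th})]^{N-2}$, and closes the surviving integral of the form $\int_0^{x_{\rm th}}x_1^{\,p}\,e^{-m x_1^2/\sigma^2}\,dx_1$ via $u=m x_1^2/\sigma^2$ as a lower incomplete gamma; each of the $N-1$ summands then reduces to the same $L_1(x_{\rm th})[F_{\alpha}(x_{\rm th})]^{N-1}$, which together with the first term yields the factor $N$ in \eqref{coin}. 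The delicate step --- where I would be most careful --- is exactly this $\mu_i\to0$ limit inside the sum: each summand carries the indeterminate product $\mu_i^{-(m-1)}\,I_{m-1}(c\,\mu_i)$, so the Bessel expansion must be applied before the limit is taken term by term; everything downstream (the Marcum-$Q$ reduction, the incomplete-gamma integration, and the constant bookkeeping) is routine, and the i.i.d.-maximum argument of the first two paragraphs provides an independent check of the final expression.
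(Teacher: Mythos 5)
Your proposal is correct, and your ``direct route'' coincides with the paper's own proof, which disposes of the corollary in one line by setting $\mu_k=0$ in Theorem~\ref{theo1} and citing a Marcum-$Q$ identity (the paper quotes only the first-order case $Q_1(0,b)=e^{-b^2/2}$, whereas you correctly use the general-order form $Q_m(0,b)=\Gamma(m,b^2/2)/\Gamma(m)$ that the $m>1$ case actually requires). What you supply, and the paper silently skips, is the fate of the $N-1$ summands in \eqref{lcrp}: resolving the indeterminate product $\mu_i^{-(m-1)}I_{m-1}(c\,\mu_i)$ via the small-argument Bessel expansion, collapsing the inner product to $[F_\alpha(x_{\rm th})]^{N-2}$, and closing the remaining integral as a lower incomplete gamma, so that each summand contributes the same $L_1(x_{\rm th})[F_\alpha(x_{\rm th})]^{N-1}$ as the first term and the factor $N$ emerges; your opening order-statistics argument ($L=N L_1 F_\alpha^{N-1}$ for the maximum of i.i.d.\ branches) is an independent, more elementary derivation that the paper only gestures at through the comparison with i.i.d.\ selection combining, and it is a genuine added check. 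One caution on the delicate step you flagged: carry out the $\mu_i\to 0$ limit on the summands as they arise in step $(c)$ of \eqref{2nd}, where the surviving power is $x_1^{2m-1}\cdot x_1^{1-m}=x_1^{m}$, so that after the Bessel expansion the closing integral is $\int_0^{x_{\rm th}}x_1^{2m-1}e^{-m x_1^2/\sigma^2}\,dx_1=\tfrac{\sigma^{2m}}{2m^m}\,\gamma\!\left(m,\tfrac{m x_{\rm th}^2}{\sigma^2}\right)$ and each summand indeed reduces to $L_1 F_\alpha^{N-1}$. If instead you take the exponent $x_1^{3m-2}$ displayed in \eqref{lcrp} at face value, the integral produces $\gamma(2m-1,\cdot)$ and the reduction fails for $m>1$; that exponent (and the factor-of-two bookkeeping between the $\sqrt{2\pi/m}$ prefactor and the two terms of \eqref{lcrp}) appears to be a typo in the displayed theorem, which your i.i.d.-maximum argument and the target expression \eqref{coin} both confirm, so anchor the specialization to the Appendix~\ref{app1} derivation rather than to the printed formula.
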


\begin{proof}
The result follows from Theorem \ref{theo1}, by replacing $\mu_k=0~\forall k$, and using $Q_1(0,b)=\int\limits_b^{\infty}xe^{-\frac{x^2}{2}}dx=e^{-\frac{b^2}{2}} \quad \text{for} \quad b \geq 0$. Without considering the aspect of spatial correlation, for a given set of system parameters, $L(x_{\rm th})$ obtained in \eqref{coin} coincides with the LCR of a conventional SC-based receiver with independent and identically distributed (i.i.d.)~channels~\cite[(15)]{iskander2001finite}. The case with $\mu_k\!=\!1~\forall k$ essentially corresponds to the scenario where the $N$ ports are identical, i.e., there is no need of any switching of the FAS among the ports.
\end{proof}

\begin{corr}\label{mu1}
For the scenario with $\mu_k\!=\!1~\forall k$, we have
\begin{equation}
L(x_{\rm th})=\frac{\sqrt{2\pi} f_D m^{m-\frac{1}{2}}{x_{\rm th}}^{2m-1} \exp\left(-\frac{m x_{\rm th}^2}{\sigma^2} \right) }{\Gamma(m)\sigma^{2m-1}}.
\end{equation}
\end{corr}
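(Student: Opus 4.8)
The plan is to derive Corollary~\ref{mu1} directly from Theorem~\ref{theo1} by taking the limit $\mu_k \to 1$ for all $k$ in the general LCR expression \eqref{lcrp}. The physical intuition, already hinted at in the proof of Corollary~\ref{cor1}, is that when $\mu_k=1$ all $N$ ports become perfectly correlated, so $\alpha_k = \alpha_1$ for every $k$ and $\alpha_{\rm FAS} = \alpha_1$. Hence the FAS collapses to a single Nakagami-$m$ channel, and the LCR should reduce to the classical single-branch Nakagami-$m$ LCR, namely $L(x_{\rm th}) = \sqrt{2\pi} f_D\, m^{m-\frac12} x_{\rm th}^{2m-1} \exp(-m x_{\rm th}^2/\sigma^2)/(\Gamma(m)\sigma^{2m-1})$. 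So the target formula is exactly what one expects, and the work is to show the messy expression \eqref{lcrp} degenerates to it.

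First I would look at the two terms inside the braces of \eqref{lcrp}. In the first term, as $\mu_k \to 1$ the Marcum-$Q$ arguments both blow up but in a comparable way; using the known limiting behaviour $Q_{\rm m}(a,b) \to \tfrac12$ as $a,b\to\infty$ with $b/a\to 1$ (or, more carefully, $Q_{\rm m}(a,b)\to 1$ when $a>b$ is kept and $\to 0$ when $b>a$; here the arguments satisfy $b/a = 1/\mu_k \to 1^+$, so $b>a$ and the Marcum-$Q$ tends to $0$), I would argue each bracket $[1 - Q_{\rm m}(\cdot,\cdot)] \to 1$. That leaves the first term equal to the prefactor $\sqrt{2\pi/m}\,\sigma f_D$ times $x_{\rm th}^{2m-1} m^m \exp(-x_{\rm th}^2 m/\sigma^2)/(\Gamma(m)\sigma^{2m})$, which simplifies to precisely the claimed right-hand side. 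Then I would show the second term (the sum over $i$ with the integral) vanishes in the limit: the factor $1/(1-\mu_i^2)$ diverges, but the Gaussian-type exponential $\exp(-m x_1^2/(\sigma^2(1-\mu_i^2)))$ and $\exp(-m x_{\rm th}^2/(\sigma^2(1-\mu_i^2)))$ decay much faster, and the modified Bessel function $I_{m-1}$ grows only like an exponential with a matching rate, so a Laplace-type / dominated-decay argument shows the whole $i$-sum $\to 0$. Care is needed because the Bessel argument $2m\mu_i x_1 x_{\rm th}/(\sigma^2(1-\mu_i^2))$ also diverges; using the asymptotic $I_{m-1}(z)\sim e^z/\sqrt{2\pi z}$ one combines the exponentials as $\exp(-m(x_1-\mu_i x_{\rm th})^2/(\sigma^2(1-\mu_i^2)))$ (up to lower-order terms), which is sharply peaked and integrates to something $O(\sqrt{1-\mu_i^2})$, killing the $1/(1-\mu_i^2)$ prefactor overall — so each summand decays.

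The main obstacle I anticipate is making the vanishing of the second term rigorous rather than heuristic: the integrand contains a product of a diverging algebraic prefactor, two collapsing Gaussians, a diverging Bessel function, and $(N-2)$ Marcum-$Q$ brackets whose limits must be tracked simultaneously. I would handle this by completing the square in the combined exponent, substituting $x_1 = \mu_i x_{\rm th} + \sqrt{1-\mu_i^2}\,u$ (or a similar rescaling that turns the sharp peak into an $O(1)$ Gaussian in $u$), and then checking that the resulting integral is bounded while the leftover prefactor scales as a positive power of $(1-\mu_i^2)$ that tends to zero. An alternative, cleaner route — which I would mention as a sanity check — is to bypass \eqref{lcrp} entirely: set $\mu_k=1$ at the level of the channel model \eqref{eq1}, observe $\alpha_{\rm FAS}=\alpha_1$ exactly, and invoke the standard Nakagami-$m$ LCR result directly (the same reference used in Corollary~\ref{cor1}), which yields the stated expression immediately. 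I would present the limiting argument from \eqref{lcrp} as the primary proof for consistency with Corollary~\ref{cor1}'s treatment, with the direct-model argument as corroboration.
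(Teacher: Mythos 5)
Your primary route (taking $\mu_k\to 1$ in \eqref{lcrp}) does not work as outlined, on two concrete points. First, the Marcum-$Q$ limit is wrong: with $a=\sqrt{2m\mu_k^2x_{\rm th}^2/(\sigma^2(1-\mu_k^2))}$ and $b=a/\mu_k$, both arguments diverge while $b-a=x_{\rm th}\sqrt{2m/\sigma^2}\,\sqrt{(1-\mu_k)/(1+\mu_k)}\to 0$; the dichotomy $Q_{\rm m}\to 0$ or $1$ requires $b-a\to\pm\infty$, not merely a fixed sign of $b-a$, and here the correct asymptotics give $Q_{\rm m}(a,b)\to 1/2$. Hence each bracket $[1-Q_{\rm m}(\cdot,\cdot)]\to 1/2$, and the first term of \eqref{lcrp} tends to $(1/2)^{N-1}$ times the claimed expression, not to the whole of it. Second, the $i$-sum does not vanish. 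Completing the square as you suggest gives the exponent $-\tfrac{m(x_1-\mu_i x_{\rm th})^2}{\sigma^2(1-\mu_i^2)}-\tfrac{m x_{\rm th}^2}{\sigma^2}$; the Gaussian peak at $\mu_i x_{\rm th}$ lies within $o\bigl(\sqrt{1-\mu_i^2}\bigr)$ of the upper limit $x_{\rm th}$, so the integral captures about half of a mass of order $\sqrt{1-\mu_i^2}$, and the Bessel asymptotic $I_{m-1}(z)\sim e^{z}/\sqrt{2\pi z}$ supplies a further factor of order $\sqrt{1-\mu_i^2}$; these two together exactly cancel the $(1-\mu_i^2)^{-1}$ prefactor, so each summand has a finite nonzero limit (your own bookkeeping, which keeps only one factor of $\sqrt{1-\mu_i^2}$, would in fact give a divergence, not decay). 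A quick check with $N=2$ shows the first term and the sum each contribute exactly one half of the stated result; in general the first term gives the fraction $(1/2)^{N-1}$ and the sum gives the rest, so the limiting route can only succeed if you evaluate and add all these nontrivial limits, which is precisely the cancellation your outline bypasses.

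By contrast, the argument you relegate to a ``sanity check'' is the paper's actual proof: for $\mu_k=1$ all ports carry the identical envelope, so $\alpha_{\rm FAS}=\alpha_1$, and using the independence $p_{|\dot h|,|h|}(\dot x,x)=p_{|\dot h|}(\dot x)\,p_{|h|}(x)$ together with $\int_0^\infty \dot x\, p_{|\dot h|}(\dot x)\,d\dot x=\sqrt{\pi/(2m)}\,\sigma f_D$ and the single-branch Nakagami-$m$ PDF immediately yields the stated $L(x_{\rm th})$. You should promote that direct argument to the proof; if you still want the consistency check against Theorem~\ref{theo1}, it must be done by computing the $1/2$-type limits of every term and verifying that they sum to one, not by discarding the second term.
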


\begin{proof}
See Appendix \ref{appmu1}.
\end{proof}

It is interesting to note from the above corollary that in the case of identical channels, the FAS LCR is independent of $N$. Nevertheless, we are aware that the analytical expression of $L(x_{\rm th})$ derived in Theorem \ref{theo1}, is intricate. Consequently, we consider the simple case of $N=2$ below.

\begin{corr}\label{corr2}
For a two-port FAS, the LCR is given by
\begin{align}  \label{n2f}
L(x_{\rm th})&=\frac{2\sqrt{2\pi } m^{m + 1/2} f_D x_{\rm th}^m \exp \left(\!\!-\frac {m x_{\rm th}^2}{\sigma^{2}(1-\mu_2^2)} \right)}{ \Gamma (m)\sigma^{2m+1}  (1-\mu_2^{2}) \mu_2^{m-1}}  \nonumber  \\
& \times \sum_{k=0}^{\infty}\! \frac{\left(\mu_2 x_{\rm th}\right)^{2k+m-1}}{(k!)\Gamma(m+k) }\!\!\! \left(\!\!\frac{ m}{\sigma^2(1-\mu_2^2) }\!\!\right)^{k-1}\!\! \nonumber \\
&\times \gamma\! \left(\! k+m, \frac{m x_{\rm th}^2}{\sigma^2(1-\mu_2^2)}\! \right).
\end{align}
\end{corr}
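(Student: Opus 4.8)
The plan is to derive \eqref{n2f} as the $N=2$ special case of Theorem~\ref{theo1}. First I would set $N=2$ in \eqref{lcrp}. The outer summation $\sum_{i=2}^{N}$ then consists of the single index $i=2$; inside its integrand the product $\prod_{k=2,\,k\ne i}^{N}$ is empty and equals $1$; and the product $\prod_{k=2}^{N}$ in the leading line retains only the $k=2$ factor. This leaves $L(x_{\rm th})$ equal to $\sqrt{2\pi/m}\,\sigma f_D$ times the sum of (i) a ``Marcum-$Q$'' term, proportional to the marginal Nakagami-$m$ density of $\alpha_1$ evaluated at $x_{\rm th}$ times $1-Q_m(\cdot,\cdot)$, and (ii) an ``integral'' term $\propto\int_0^{x_{\rm th}} x_1^{m}\exp\!\big(-\tfrac{m x_1^2}{\sigma^2(1-\mu_2^2)}\big) I_{m-1}\!\big(\tfrac{2m\mu_2 x_1 x_{\rm th}}{\sigma^2(1-\mu_2^2)}\big)\,dx_1$ (the $x_1$-power here is $m$ because the $x_1^{2m-1}$ coming from $f_{\alpha_1}$ combines with the $x_1^{1-m}$ coming from the conditional density in \eqref{pdf_FAS_Nakagami}).

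Second, I would turn each term into a single series in the lower incomplete gamma function. For term (i) I would use the canonical expansion $1-Q_m(a,b)=e^{-a^2/2}\sum_{k\ge 0}\tfrac{(a^2/2)^k}{k!}\tfrac{\gamma(m+k,\,b^2/2)}{\Gamma(m+k)}$ with $a^2/2=m\mu_2^2 x_{\rm th}^2/[\sigma^2(1-\mu_2^2)]$ and $b^2/2=m x_{\rm th}^2/[\sigma^2(1-\mu_2^2)]$; the factor $e^{-a^2/2}$ multiplies the standing $\exp(-m x_{\rm th}^2/\sigma^2)$ and, after combining the two quadratics, collapses to $\exp(-m x_{\rm th}^2/[\sigma^2(1-\mu_2^2)])$. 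For term (ii) I would substitute the Maclaurin series $I_{m-1}(z)=\sum_{k\ge 0}\tfrac{(z/2)^{2k+m-1}}{k!\,\Gamma(k+m)}$, interchange sum and integral (justified by uniform convergence of the Bessel series on the compact interval $[0,x_{\rm th}]$), and compute each resulting integral $\int_0^{x_{\rm th}} x_1^{2k+2m-1}\exp(-m x_1^2/[\sigma^2(1-\mu_2^2)])\,dx_1$ via the change of variable $u=m x_1^2/[\sigma^2(1-\mu_2^2)]$, obtaining $\tfrac12\big(\sigma^2(1-\mu_2^2)/m\big)^{k+m}\,\gamma\!\big(k+m,\,m x_{\rm th}^2/[\sigma^2(1-\mu_2^2)]\big)$.

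Third, I would merge. Both series now carry the same building blocks — the factor $\tfrac{1}{k!\,\Gamma(k+m)}$, the incomplete gamma $\gamma(k+m,\,m x_{\rm th}^2/[\sigma^2(1-\mu_2^2)])$, and the exponential $\exp(-m x_{\rm th}^2/[\sigma^2(1-\mu_2^2)])$ — so their general terms coincide up to a constant multiple and add into a single series. Collecting the surviving powers of $m$, $\sigma$, $\mu_2$ and $x_{\rm th}$, and folding in the $\sqrt{2\pi/m}\,\sigma f_D$ prefactor, yields \eqref{n2f}. (Equivalently, one can bypass \eqref{lcrp} and rebuild the $N=2$ crossing rate from Rice's formula: the crossing event splits according to which port realizes the maximum at the crossing instant, and the part in which $\alpha_2$ attains $x_{\rm th}$ with $\alpha_1<x_{\rm th}$ contributes a rate proportional to $\int_0^{x_{\rm th}} f_{\alpha_1,\alpha_2}(x_1,x_{\rm th})\,dx_1$ with $f_{\alpha_1,\alpha_2}$ read off from \eqref{pdf_FAS_Nakagami}, which is exactly term (ii) above.)

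The main obstacle is bookkeeping rather than any new idea: one must read the exponent and the $x_1$-power inside the integral of \eqref{lcrp} correctly so that, after the change of variable, the incomplete-gamma order comes out as $k+m$ (an $x_1$-power of $m$ gives order $k+m$; any other value would not), and one must verify that the two exponential factors genuinely coalesce into $\exp(-m x_{\rm th}^2/[\sigma^2(1-\mu_2^2)])$. Beyond that, the termwise integration needs the standard uniform/dominated-convergence justification, and the remaining manipulations — exponent arithmetic and cancellation of the $\mu_2^{m-1}$, $\sigma$, and $m$ powers against the prefactor — are routine.
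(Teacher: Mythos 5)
Your route is sound and lands on \eqref{n2f}, but it is not quite the paper's route. The paper (Appendix~\ref{app2}) does not expand the Marcum-$Q$ term at all: for $N=2$ it writes the crossing rate as $\sqrt{\pi/(2m)}\,\sigma f_D$ times the sum of the two joint-PDF integrals $\int_0^{x_{\rm th}}p_{|h_1|,|h_2|}(x_{\rm th},x_2)dx_2+\int_0^{x_{\rm th}}p_{|h_1|,|h_2|}(x_1,x_{\rm th})dx_1$ (see \eqref{n21}), observes that the bivariate density \eqref{pdf2var_nakagami} is symmetric in its arguments so the two integrals are equal, and therefore needs only one Bessel-series expansion and one termwise integration \eqref{n22}. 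You instead keep the two terms of \eqref{lcrp} separate, expand term (i) with the canonical series $1-Q_m(a,b)=e^{-a^2/2}\sum_k\frac{(a^2/2)^k}{k!}\frac{\gamma(m+k,b^2/2)}{\Gamma(m+k)}$ and term (ii) with the $I_{m-1}$ Maclaurin series, and then merge. That works: after your exponential recombination the two series are in fact identical term by term (your ``up to a constant multiple'' has constant $1$), which is just the series-level reflection of the symmetry the paper exploits up front. So your proof is a correct, slightly more laborious variant; the paper's symmetry shortcut buys one expansion instead of two, while your version has the side benefit of independently verifying that the Marcum-$Q$ term and the integral term carry equal weight.

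One bookkeeping caution, since you defer all constants to the end: the displayed form of \eqref{lcrp} is not internally consistent with its own derivation in Appendix~\ref{app1}. You already (correctly) repaired the $x_1$-power, reading $x_1^{m}$ where \eqref{lcrp} prints $x_1^{3m-2}$ (they agree only at $m=1$). But there is also a factor-of-$2$ issue: relative to the prefactor $\sqrt{2\pi/m}\,\sigma f_D$, the first term of \eqref{lcrp} has had the factor $2$ of the Nakagami marginal absorbed while the integral term has not, so taking the printed constants verbatim and then ``folding in the $\sqrt{2\pi/m}\,\sigma f_D$ prefactor'' would give $\tfrac{3}{2}$ of \eqref{n2f}. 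To land exactly on \eqref{n2f} you must anchor the normalization to \eqref{step2} (or \eqref{alou}): each of your two terms equals one of the joint-PDF integrals above, so the total is twice either one multiplied by $\sqrt{\pi/(2m)}\,\sigma f_D$, exactly as in \eqref{n21}. With that fixed, the rest of your argument (series interchange on a compact interval, the change of variable giving $\tfrac12 c^{-(k+m)}\gamma(k+m,cx_{\rm th}^2)$ with $c=m/[\sigma^2(1-\mu_2^2)]$, and the power collection) goes through and reproduces \eqref{n2f}.
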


\begin{proof}
See Appendix \ref{app2}.
\end{proof}

The above corollary demonstrates the effect of parameters such as $x_{\rm th}, \mu,$ and $\sigma$ on the LCR. For example, from \eqref{n2f} we observe that $L(x_{\rm th})$ is the product of a unimodal function and an increasing function concerning $x_{\rm th}$. This implies that $L(x_{\rm th})$ is also unimodal, that is, $L(x_{\rm th})$ initially increases with $x_{\rm th}$, but it starts to decrease after a certain point.

\subsection{AFD}\label{subsect:AFD}
The AFD $\overline r_{f}(x)$ refers to the average length of fades during which the signal envelope remains below a certain threshold \cite{stuber}. Mathematically, the AFD is expressed as 
\begin{align} \label{AFD}
\overline r_{f}(x_{\rm th})&=\frac{F_{|h_{1}|, {\dots },|h_{N}|}(x_{1}, {\dots },x_{N})}{L(x_{\rm th})}. 
\end{align}
By using~\eqref{cdf1} and~\eqref{lcrp}, the AFD for a FAS is obtained in~\eqref{AFD_FAS} (see top of next page).
\begin{figure*}[t]
\begin{equation}\label{AFD_FAS}
\overline r_{f}(x_{\rm th})=\frac{\frac{2m^m}{\Gamma (m)\sigma^{2m}} \int_0^{x_\mathrm{th}}x_1^{2m-1} \exp\left(-\frac{mx_1^2}{\sigma^2}\right)\prod_{k=2}^N\left[ 1-Q_{\rm m}\left(\sqrt{\frac{2m\mu_k^2x_1^2}{\sigma^2\left( 1-\mu_k^2 \right)}} ,\sqrt {\frac{2m x^2_\mathrm{th}}{\sigma^2\left( 1-\mu _k^2 \right)}}  \right) \right]{\rm d}{x_1}}{\sqrt{\frac{2 \pi}{m}}\sigma f_D\left\{\begin{array}{l}
\frac{{x_{\rm th}^{2m - 1}{m^m}}}{{\Gamma (m)\sigma^{2m}}}{\exp\left(-\frac{{x_{\rm th}^2m}}{{\sigma^2}}\right)}\prod_{k=2}^N \left[ {1 - {Q_{\text{m}}}\left( {\sqrt {\frac{{2m\mu _k^2x_{\rm th}^2}}{{\sigma^2\left( {1 - \mu _k^2} \right)}}} ,\sqrt {\frac{{2m}}{{\sigma^2\left( {1 - \mu _k^2} \right)}}} x_{\rm th}} \right)} \right]\\
+\sum_{i=2}^N \frac{2m x_{\rm th}^m\sigma^{m - 1}}{\sigma^{m+1}\left( 1 - \mu _i^2 \right)\mu _i^{m - 1}} \exp\left(- \frac{mx_{\rm th}^2 }{\sigma^2\left( {1 - \mu _i^2} \right)}\right)\int_0^{x_{\rm th}}\frac{2x_1^{3m - 2} m^m}{\Gamma (m)\sigma^{2m}}  \exp\left(- \frac{m x_1^2}{\sigma^2\left( {1 - \mu _i^2} \right)}\right) I_{m-1}\left[ \frac{2m \mu_i x_1 x_{\rm th}}{\sigma^2\left( 1-\mu_i^2 \right)} \right]\\
\left.\times \prod _{k=2 \atop k \neq i}^N \left[ 1 - Q_{\text{m}}\left(\sqrt{\frac{2m\mu_k^2}{\sigma^2\left(1-\mu _k^2\right)}}x_1,\sqrt{\frac{2m}{\sigma^2\left(1-\mu_k^2\right)}} x_{\rm th}\right) \right] dx_1\right\}
\end{array}
\right\}}
\end{equation}
\hrule
\end{figure*}
Unlike the conventional SC-based receivers, the AFD of an FAS takes into account the spatial correlation. Again, we consider the following two extreme cases: $\mu_k=0,1$, for $k=1,\dots,N$ as we did in the case of $L(x_{\rm th})$, and the AFD $\overline r_{f}(x_{\rm th})$ corresponding to $\mu_k=0~\forall k$, i.e., the scenario with no spatial correlation below.

\begin{corr}  \label{cor4}
In case of no spatial correlation, i.e., $\mu_k=0~\forall k$, $\overline r_{f}(x_{\rm th})$ can be expressed as
\begin{equation}  \label{AFD_cor4}
\overline r_{f}(x_{\rm th})= \frac{\gamma\left(m,m{\frac{x_{\rm th}^2}{\sigma^2}}\right)\sigma^{2m-1}}{\sqrt{2\pi} f_D N m^{m-\frac{1}{2}}{x_{\rm th}}^{2m-1} \exp\left(-\frac{m x_{\rm th}^2}{\sigma^2}  \right)}.
\end{equation}
\end{corr}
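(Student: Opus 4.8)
The plan is to evaluate the AFD directly from its definition in \eqref{AFD}, namely $\overline r_{f}(x_{\rm th}) = F_{|h_1|,\dots,|h_N|}(x_{\rm th},\dots,x_{\rm th})/L(x_{\rm th})$, by simplifying the numerator and the denominator separately at $\mu_k=0~\forall k$. For the denominator there is nothing new to do: Corollary~\ref{cor1} already supplies the no-correlation LCR in \eqref{coin}, so it suffices to substitute that expression. The real work therefore reduces to evaluating the joint CDF in \eqref{cdf1} when $\mu_k=0$.

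First I would set $\mu_k=0$ in \eqref{cdf1}. Since the first argument of every Marcum $Q$-function then vanishes, I would use $Q_m(0,b)=\Gamma(m,b^2/2)/\Gamma(m)$, which turns each bracketed factor $1-Q_m\!\big(0,\sqrt{2m x_{\rm th}^2/(\sigma^2(1-\mu_k^2))}\big)$ into $\gamma(m,m x_{\rm th}^2/\sigma^2)/\Gamma(m)$ — a quantity independent of the integration variable $x_1$. The product over $k=2,\dots,N$ thus factors out as $\big[\gamma(m,m x_{\rm th}^2/\sigma^2)/\Gamma(m)\big]^{N-1}$, and the remaining integral $\frac{2m^m}{\Gamma(m)\sigma^{2m}}\int_0^{x_{\rm th}}x_1^{2m-1}e^{-m x_1^2/\sigma^2}\,dx_1$ is handled by the substitution $u=m x_1^2/\sigma^2$, yielding exactly $\gamma(m,m x_{\rm th}^2/\sigma^2)/\Gamma(m)$. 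Multiplying the two pieces gives the compact form $F_{|h_1|,\dots,|h_N|}(x_{\rm th},\dots,x_{\rm th})=\big[\gamma(m,m x_{\rm th}^2/\sigma^2)/\Gamma(m)\big]^N$, which is just the $N$-fold product of marginal Nakagami-$m$ CDFs, exactly as one expects once the ports decorrelate into an i.i.d. ensemble.

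Finally I would divide this numerator by \eqref{coin}: the $\Gamma(m)^{-N}$ factors cancel, one power of $\gamma(m,m x_{\rm th}^2/\sigma^2)$ survives against the $\gamma(\cdot)^{N-1}$ in the LCR, and collecting $f_D$, $N$, $\sigma^{2m-1}$, $x_{\rm th}^{2m-1}$, $\exp(-m x_{\rm th}^2/\sigma^2)$ and the residual power of $m$ reproduces \eqref{AFD_cor4}. I do not anticipate a genuine obstacle — the argument is essentially bookkeeping of constants — so the only points requiring care are the evaluation $Q_m(0,b)=\Gamma(m,b^2/2)/\Gamma(m)$ (which is what produces the lower incomplete gamma function in the numerator) and keeping the exponents of $m$ and $\sigma$ straight through the substitution. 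One could equivalently shortcut the numerator by invoking independence of $\alpha_1,\dots,\alpha_N$ at $\mu_k=0$ and writing the joint CDF at once as $\prod_{k=1}^N F_{\alpha_k}(x_{\rm th})$, but deriving it from \eqref{cdf1} keeps the corollary self-contained.
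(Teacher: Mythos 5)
Your proposal is correct and follows essentially the same route as the paper: reduce the joint CDF in \eqref{cdf1} at $\mu_k=0$ to $\bigl(\gamma(m,m x_{\rm th}^2/\sigma^2)/\Gamma(m)\bigr)^{N}$ and divide by the no-correlation LCR \eqref{coin} from Corollary~\ref{cor1}. The only difference is that you spell out the intermediate steps (the identity $Q_m(0,b)=\Gamma(m,b^2/2)/\Gamma(m)$ and the substitution $u=m x_1^2/\sigma^2$) that the paper compresses into ``the CDF reduces to'' plus ``some algebraic manipulation,'' and your bookkeeping of the constants is accurate.
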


\begin{proof}
When $\mu_k=0~\forall k$, the CDF in~\eqref{cdf1} can be reduced to $\left(\!\!\frac{\gamma\left(\!m,\frac{m x_{\rm th}^2}{\sigma^2}\right)}{\Gamma{m}}\!\!\right)^{\!\!N}$. By using this reduced CDF and~\eqref{coin}, followed by some algebraic manipulation, \eqref{AFD_FAS} becomes \eqref{AFD_cor4}. 
\end{proof}

Note that the AFD obtained in~\eqref{AFD_cor4} matches with that of the conventional SC-based receiver with i.i.d.~channels~\cite[(13)]{Yacoub}.

\begin{corr}  \label{cor5}
For the case of $\mu_k=1~\forall k$, $\overline r_{f}(x_{\rm th})$ becomes
\begin{equation}  \label{AFD_cor5}
\overline r_{f}(x_{\rm th})= \frac{\gamma\left(m,m{\frac{x_{\rm th}^2}{\sigma^2}}\right)\sigma^{2m-1}}{\sqrt{2\pi} f_D m^{m-\frac{1}{2}}{x_{\rm th}}^{2m-1} \exp\left(-\frac{m x_{\rm th}^2}{\sigma^2}  \right)}.
\end{equation}
\end{corr}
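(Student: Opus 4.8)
The plan is to exploit the defining identity \eqref{AFD}, i.e.\ $\overline r_{f}(x_{\rm th})=F_{|h_1|,\dots,|h_N|}(x_{\rm th},\dots,x_{\rm th})/L(x_{\rm th})$, together with the LCR already computed for this regime in Corollary~\ref{mu1}. The only genuinely new object required is the joint CDF evaluated on the diagonal $x_1=\dots=x_N=x_{\rm th}$ under $\mu_k=1~\forall k$, after which the corollary follows by a one-line substitution.

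First I would note that, in the channel model \eqref{eq1}, putting $\mu_k=1$ annihilates the independent terms $x_k,y_k$ and forces $h_k=h_1$ for every $k$; equivalently $\alpha_1=\alpha_2=\dots=\alpha_N$ almost surely (this is exactly the ``identical ports'' remark made after Corollary~\ref{cor1}). Hence the event $\{\alpha_1<x_{\rm th},\dots,\alpha_N<x_{\rm th}\}$ collapses to $\{\alpha_1<x_{\rm th}\}$, so that $F_{\alpha_1,\dots,\alpha_N}(x_{\rm th},\dots,x_{\rm th})=F_{\alpha_1}(x_{\rm th})=\gamma\!\left(m,m x_{\rm th}^2/\sigma^2\right)/\Gamma(m)$, the standard Nakagami-$m$ CDF obtained by integrating the PDF \eqref{nakadefsr} with the substitution $u=m\alpha^2/\sigma^2$. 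As an internal consistency check staying entirely within \eqref{cdf1}, I would alternatively take $\mu_k\to1$ there: on the integration range $x_1\in[0,x_{\rm th}]$ one has $\mu_k x_1\to x_1\le x_{\rm th}$, so the second Marcum-$Q$ argument dominates the first and each factor obeys $1-Q_{\rm m}(\cdot,\cdot)\to 1$ by the large-argument behaviour $Q_{\rm m}(a,b)\to1$ for $b<a$ and $Q_{\rm m}(a,b)\to0$ for $b>a$; since every factor is bounded by $1$, dominated convergence pulls the product out of the integral and \eqref{cdf1} reduces to $\tfrac{2m^m}{\Gamma(m)\sigma^{2m}}\int_0^{x_{\rm th}}x_1^{2m-1}e^{-mx_1^2/\sigma^2}\,dx_1=\gamma\!\left(m,m x_{\rm th}^2/\sigma^2\right)/\Gamma(m)$, matching the probabilistic argument.

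Finally, I would substitute this CDF and $L(x_{\rm th})$ from Corollary~\ref{mu1} into \eqref{AFD}: the $\Gamma(m)$ in numerator and denominator cancel, leaving precisely \eqref{AFD_cor5}. The only conceivable obstacle lies in the limiting route — making the Marcum-$Q$ asymptotics uniform enough to interchange limit and integral near the endpoint $x_1=x_{\rm th}$, where the first and second arguments of $Q_{\rm m}$ become comparable — but this is completely sidestepped by the direct collapse $\alpha_1=\dots=\alpha_N$, which renders the diagonal CDF immediate and makes the whole proof essentially a substitution. (I would also remark, as the paper does for the $\mu_k=0$ case, that \eqref{AFD_cor5} is $N$-independent and coincides with the single-branch Nakagami-$m$ AFD, which serves as a sanity check.)
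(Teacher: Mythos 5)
Your proposal is correct and follows essentially the same route as the paper: with $\mu_k=1$ the ports carry identical channels, so the diagonal joint CDF collapses to the single-branch Nakagami-$m$ CDF $\gamma\!\left(m,\frac{m x_{\rm th}^2}{\sigma^2}\right)/\Gamma(m)$, and dividing by the LCR of Corollary~\ref{mu1} via \eqref{AFD} gives \eqref{AFD_cor5}, exactly the $N$-independence the paper's one-line proof invokes. Your additional limiting check on \eqref{cdf1} via the Marcum-$Q$ asymptotics is a sound (if unnecessary) consistency verification beyond what the paper records.
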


\begin{proof}
Similar to LCR before, the AFD is also independent of $N$ in the case of identical channels. 
\end{proof}

The AFD obtained in~\eqref{AFD_FAS} is complicated. Therefore, we provide a simple expression for the special case $N=2$.

\begin{corr}  \label{cor6}
By substituting $N=2$ in \eqref{cdf1}, the CDF reduces to~\cite[(6.3)]{simon2008digital}
\begin{align}
&F_{|h_{1}|,|h_{2}|}(x_{1},x_{2})\nonumber\\ 
&= \frac{(1 - \mu_2^{2})^m}{\Gamma(m)}\!\! \sum_{k=0}^{\infty} \mu_2^{2k}
\frac{ \gamma\!\left(\!m \!+ \!k, \frac{mx_1^2}{\sigma^2 (1 - \mu_2^{2})}\!\right) 
       \gamma\!\left(\!m +\! k, \frac{mx_2^2}{\sigma^2 (1 - \mu_2^{2})}\right)}
     {k! \, \Gamma(m + k)}.\label{CDF_M2}
\end{align}
By taking~\eqref{AFD} into account after utilizing~\eqref{n2f} and~\eqref{CDF_M2}, the AFD for $N=2$ is given as~\eqref{AFD_M2} (see next page).
\begin{figure*} [t]
\begin{align}  \label{AFD_M2}
\overline r_{f}(x_{\rm th})=\dfrac{\dfrac{2m^m}{\Gamma (m)\sigma^{2m}} \bigintss\limits_0^{  x_\mathrm{th}    } x_1^{2m-1} \exp\left(-\dfrac{mx_1^2}{\sigma^2}\right)   \left[ 1-Q_{\rm m}\left( \sqrt{\dfrac{2m\mu_k^2x_1^2}{\sigma^2\left( 1-\mu_k^2 \right)}} ,\sqrt {\dfrac{2m x^2_\mathrm{th}  }{\sigma^2\left( 1-\mu _k^2 \right)}}  \right) \right] {\rm d}{x_1}. }
{\dfrac{2\sqrt{2\pi}  m^{m + 1/2} f_D x_{\rm th}^m \exp \left(\!\!-\frac {m x_{\rm th}^2}{\sigma ^{2}(1-\mu_2^2)} \right)}{ \Gamma (m)\sigma^{2m+1}  (1-\mu_2^{2}) \mu_2^{m-1}}  \mathlarger{\mathlarger{\sum}}_{k=0}^{\infty}\! \dfrac{\left(\mu_2 x_{\rm th}\right)^{2k+m-1}}{(k!)\Gamma(m+k) }\!\! \left(\!\!\dfrac{ m }{\sigma^2(1-\mu_2^2)}\!\!\right)^{k-1}\!\!
\gamma\! \left(\! k+m, \dfrac{m x_{\rm th}^2}{\sigma^2(1-\mu_2^2)}\! \right)}
\end{align}
\hrule
\end{figure*}
\end{corr}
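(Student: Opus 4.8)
The plan is to treat Corollary~\ref{cor6} in two independent steps: first specialise the $N$-variate CDF \eqref{cdf1} to obtain the bivariate series \eqref{CDF_M2}, then form the ratio demanded by the AFD definition \eqref{AFD}.

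For the first step I would set $N=2$ in \eqref{cdf1} with $X_1=x_1$, $X_2=x_2$, which leaves a single integral of $t^{2m-1}e^{-mt^2/\sigma^2}$ against $1-Q_m\!\bigl(\sqrt{2m\mu_2^2 t^2/[\sigma^2(1-\mu_2^2)]},\sqrt{2m x_2^2/[\sigma^2(1-\mu_2^2)]}\bigr)$. I would then substitute the series representation of the complementary $m$th-order Marcum $Q$-function,
\[
1-Q_m(a,b)=\sum_{k=0}^{\infty}\frac{(a^2/2)^k e^{-a^2/2}}{k!}\,\frac{\gamma(m+k,\,b^2/2)}{\Gamma(m+k)},
\]
valid for any real $m\ge1$, and interchange the sum with the integral (legitimate because every summand is nonnegative, so monotone convergence applies). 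Since $\tfrac{m}{\sigma^2}+\tfrac{m\mu_2^2}{\sigma^2(1-\mu_2^2)}=\tfrac{m}{\sigma^2(1-\mu_2^2)}$, the $t$-integrand collapses to $t^{2(m+k)-1}\exp\!\bigl(-\tfrac{mt^2}{\sigma^2(1-\mu_2^2)}\bigr)$, and the change of variable $u=\tfrac{mt^2}{\sigma^2(1-\mu_2^2)}$ turns the integral into $\tfrac12\bigl(\tfrac{\sigma^2(1-\mu_2^2)}{m}\bigr)^{m+k}\gamma\!\bigl(m+k,\tfrac{mx_1^2}{\sigma^2(1-\mu_2^2)}\bigr)$. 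Collecting the surviving powers of $m$, $\sigma$, $\mu_2$ and $1-\mu_2^2$ cancels everything except the prefactor $\tfrac{(1-\mu_2^2)^m}{\Gamma(m)}$ and the symmetric product of two lower incomplete gamma functions, i.e.\ exactly \eqref{CDF_M2}; equivalently one may simply quote \cite[(6.3)]{simon2008digital}.

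For the second step I would apply \eqref{AFD} at $x_1=x_2=x_{\rm th}$: the numerator is the $N=2$ CDF — written either in the integral form of \eqref{cdf1} (as displayed in \eqref{AFD_M2}) or in the series form \eqref{CDF_M2} — and the denominator is the two-port LCR $L(x_{\rm th})$ from Corollary~\ref{corr2}, i.e.\ \eqref{n2f}. Dividing the two gives \eqref{AFD_M2} directly.

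The only delicate point is the first step: selecting the right series expansion of $1-Q_m$ for general (possibly non-integer) $m\ge1$ and bookkeeping the constants carefully so that the clean prefactor and the product of incomplete gammas emerge — the term-by-term integration itself needs only a one-line monotone-convergence remark. The second step is purely mechanical once \eqref{n2f} and \eqref{CDF_M2} are in hand.
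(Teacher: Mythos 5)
Your proposal is correct and follows essentially the same route as the paper: specialise the CDF to $N=2$ (the paper simply cites \cite[(6.3)]{simon2008digital}, whereas you re-derive that series from \eqref{cdf1} via the Marcum-$Q$ expansion and Tonelli, which checks out, including the exponent recombination $\tfrac{m}{\sigma^2}+\tfrac{m\mu_2^2}{\sigma^2(1-\mu_2^2)}=\tfrac{m}{\sigma^2(1-\mu_2^2)}$), and then form the ratio \eqref{AFD} with the two-port LCR \eqref{n2f} evaluated at $x_1=x_2=x_{\rm th}$, which yields \eqref{AFD_M2} directly.
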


Moreover, the level crossing analysis determines the average non-fade duration $\overline r_{f}(x_{\rm th})$ of the system, which is obtained by utilizing the following relationship~\cite{ohmann2015minimum}
\begin{equation}\label{ANFD}
\overline r_{n}(x_{\rm th})=\frac{1}{L(x_{\rm th})}-\overline r_{f}(x_{\rm th}).
\end{equation}
Furthermore, the reciprocals of $\overline r_{f}(x_{\rm th})$ and $\overline r_{n}(x_{\rm th})$ characterize the failure rate and repair rate as $\Upsilon=\frac{1}{\overline r_{n}(\sqrt{\eta})}$ and $\beta=\frac{1}{\overline r_{f}(\sqrt{\eta})}$~\cite{Irfan_mEC}. Next, by considering the failure rate, we examine the mission reliability, a new metric of dependability, while utilizing its attributes such as MTTFF and failure rate. 

\subsection{Mission Reliability}
The existing reliability concept for the upcoming generation of wireless communication does not explicitly include temporal features, rendering it insufficient when applied to practical time-varying wireless channels. The conventional definition of reliability is well harmonized with the notion of availability when the principles of dependability theory are applied~\cite {missionReliability_2018}. The new concept of reliability diverges from the 3GPP reliability definition, which is based on the mean time required to deliver a packet successfully and aligns well with steady-state availability, whereas mission reliability incorporates the time dependency of the system. Moreover, the work in~\cite{missionReliability_2018} thoroughly elucidates the fundamental difference between mission reliability and channel availability, demonstrating that instantaneous channel availability quickly reaches steady-state availability $A(t) = 0.99$, which indicates a $1\%$ outage probability at any given moment. In contrast, mission reliability decreases significantly and approaches zero due to the random fading process, which hinders practically failure-free operation over longer durations ($\mathrm{\Delta T}$) and necessitates improvements in mission reliability and MTTFF within the scope of URLLC. Therefore, the authors in~\cite{Irfan_mEC} established a mission reliability metric specifically designed for FBL, defining it as the probability of achieving failure-free operation during the mission duration, which vitalizes the connection between short packet transmission and reliability guaranteed operation over the mission duration, i.e.,
\begin{equation} 
\mathrm{R_{M}(\Delta T},n,\epsilon) = \Pr \left \lbrace{ U(\tau)_{\mathrm{FBL}}=1 , \forall \, \tau \in [0,\mathrm{\Delta T}] }\right \rbrace.
\end{equation}

\subsection{MTTFF}
MTTFF is an important dependability metric that quantifies the average duration a functional component of a system operates until it experiences its first failure. This metric is closely associated with mission reliability, which is connected to the time dimension. Hence, MTTFF is essential to ensure the high reliability required for mission-critical applications, such as remote surgery and industrial automation, where even short failures can result in severe repercussions. It enables improved planning and proactive maintenance of a system by providing insights into how long a system can survive until the initial failure. Although MTTFF holds significant importance in several fields, it has not garnered enough attention in the context of wireless communication.
Nevertheless, it has been recently applied to wireless communication~\cite{missionReliability_2018,Irfan_mEC,hossler2017applying}. Therefore, due to its proven efficacy in time-based reliability analysis for URLLC, we utilize it in our scenarios. 

MTTFF is obtained by using an infinite integral, or Laplace transform $\mathrm{R_{M}^{*}(s)}$ of the mission reliability function, while setting the Laplace parameter $s=0$, i.e.,
\begin{equation} {\mathrm{MTTFF}}=\int _{0}^\infty \mathrm{R_{M}(\tau)} \,\mathrm {d}\tau =\mathrm{R_{M}^{*}(0)}. \end{equation}
Our system consists of an FAS-based receiver, which selects the port with the best channel. Hence, it is considered a single component. Therefore, the closed-form expression for the MTTFF of a single component is given by~\cite{hoyland2009system_RELIABILITY}
\begin{equation} \label{MTTFF_M}
{\mathrm {MTTFF} }=\frac{1}{\Upsilon}.
\end{equation}
Thus, the mission reliability for the system is determined by leveraging MTTFF under the assumption that the time to failure with constant $\Upsilon$ is distributed exponentially~\cite{hoyland2009system_RELIABILITY}, i.e.,
\begin{equation}  \label{missionReliability_FBL}
\mathrm{R_{M}}(\mathrm{\Delta T})=\exp {\left (-{\frac {\mathrm{\Delta T}}{ {\mathrm{MTTFF}}  } }\right)}.
\end{equation}
 

\section{mEEE}\label{sc:mEEE_FAS}
Here, we evaluate the mEEE, a new dependability metric defined as the ratio between mEC and total energy consumption of the system. The expression for mEC is given, along with an analysis of the Markovian arrival model and an investigation of the effective bandwidth (EB) under statistical QoS constraint to capture the impact of bursty traffic and latency. Then we formulate the mEEE maximization problem, considering the constraints of mission reliability and latency.

\subsection{Preliminaries on Statistical QoS Provisioning}
We employ the well-known large-deviations framework of EC and its dual, the EB, to capture statistical QoS constraints at the link layer. Denoting by $\theta>0$ the QoS exponent (larger~$\theta$ implies stricter delay guarantees), the EC is defined as the maximum constant arrival rate supported by the service process $S[t]$ under exponent~$\theta$~\cite{ART:DAPENGwu-2003,Fahad_EEE_2022}:
\begin{equation}\label{eq:EC-general}
  \mathrm{EC}(\theta)
  = -\lim_{t\to\infty}\tfrac{1}{\theta t}
    \ln\mathbb{E}\bigl[e^{-\theta S[t]}\bigr]
  = -\tfrac{\Lambda(-\theta)}{\theta},
\end{equation}
where $\Lambda(\cdot)$ is the log-moment generating function of the per-block service rate (e.g., the maximal FAS rate~$R$ in \eqref{eq:EC-general}). In the two-state (ON-OFF) model, this reduces to 
\begin{multline}\label{eq:EC-ONOFF}
\mathrm{EC}(\theta)
= -\frac{1}{\theta}
    \ln\!\left[\tfrac12(V_{11}+V_{22}e^{\theta R}) \right.\\
\left.\qquad+\tfrac12\sqrt{(V_{11}+V_{22}e^{\theta R})^2
      +4(V_{11}+V_{22}-1)\,e^{\theta R}}\right],
\end{multline}
with transition probabilities $V_{ij}$. Then the EB of a discrete-time Markov arrival process with transition matrix $[\mathrm{p}_{ij}]$,
\begin{equation}\label{eq:EB-DTMC}
  a^*(\theta,r)
  = \frac{1}{\theta}\ln\Bigl[\frac{(1-S)+S\,e^{\,r\theta}}{1}\Bigr]
  = \frac{1}{\theta}\ln\bigl[1 - S + S e^{r\theta}\bigr],
\end{equation}
gives the minimal service rate guaranteeing $\Pr\{D>d_{\max}\}\!\approx\!
e^{-\theta\,a^*(\theta)\,d_{\max}}$, where $\mathrm{p_{11}}=1-S$ and $\mathrm{p_{22}}=S$. Hence, $\mathrm{P_{ON}}=S$ shows the variation of arrival rates $r$ at the source, and $\mathrm{P_{ON}}=1$ indicates constant traffic arrivals \cite{Fahad_EEE_2022,HirleySecureEC}. Under statistical QoS provisioning, buffer overflow and delay violation probabilities both decay exponentially in their respective thresholds, with exponents $\theta$ and $\theta a^*(\theta)$. Moreover, the maximum average arrival rate $\mathrm{\overline r_{max}}$ is defined as the product of the probability of arrival and average arrival rate $\mathrm{\overline r_{max}}=r \mathrm{P_{ON}}$. Furthermore, we utilize the equality $a(\theta, r)=\mathrm{m{EC}}$ to get $r$, which can support failure-free transmission for given $\Phi, \mathrm{R}, \mathrm{\Delta T}$, and $\theta$. Similarly, for the discrete-time Markov source, the $\mathrm{\overline r_{max}}$ is obtained as
\begin{align}\label{eq_MaxArrivalRate}
&\mathrm{\overline r_{max}} =\frac{S}{\theta}\ln\left(\frac{1}{S}e^{\theta\mathrm{m{EC}}}-\left(1-S\right)\right).
\end{align}

\subsection{mEEE for FBL}\label{mEEE_fbl}
The FBL notably reduces latency, boosts reliability, and lowers energy consumption in IIoT networks. Moreover, short-packet transmission reduces risk, improving functional integrity, which is critical for maintaining high data integrity in a noisy industrial environment~\cite{Vitturi}. It further enhances the battery longevity of IIoT devices by minimizing transmission time and power consumption per packet~\cite{mao2021energy}. Additionally, short packets in FBL accelerate transmission and processing, which is crucial in time-critical industrial applications where even minor delays can cause detrimental outcomes. Various industrial applications utilize short packets and require a transmission period during which no failures occur. However, traditional EEE metrics do not provide insights into such failures. Therefore, we introduce and analyze mEEE, a novel dependability theory metric that can capture the temporal characteristics of wireless fading channels. The mEEE for the system, under delay outage probability and mission duration, can be defined as the ratio between mEC and the total power consumption at the transmitter, i.e.,
\begin{align}\label{ObjFunction}
\mathrm{mEEE}(\Phi)=\frac{\mathrm{mEC}(\Phi)}{\mathrm{P_{t}(\Phi)}}.
\end{align}
It provides information about the energy efficiency of the system until the first failure occurs. The energy consumption of a system is primarily comprised of a frequency synthesiser, low-noise amplifier (LNA), filters, power amplifier (PA), digital-to-analogue (AD/DA), analogue-to-digital converters, and mixers.

The $\mathrm{P_{t}(\Phi)}=\vartheta \Phi +\mathrm{P_{c}}$ in the basic linear power consumption model of EEE is the total power consumption, where $\vartheta$ and $\mathrm{P_{c}}$ indicate the drain efficiency of PA and dissipated hardware power, respectively. But this model has the disadvantage of assuming that data is always available at the transmitter, which leads to an overestimation of energy consumption. This assumption is implausible given that continuous transmission utilizes more energy than intermittent or sporadic transmission. A more realistic approach was proposed in~\cite{Fahad_EEE_2022} based on the probability of data arrival at the source and the probability of a non-empty buffer. Thus, the fundamental linear consumption model can be refined by integrating the transmit mode with transmission probability, i.e., $\mathrm{P_{ptx}}$ (the probability of being in transmission mode) and $\mathrm{P_{idle}}$ (the probability of being in idle mode). Note that $\mathrm{P_{idle}}$ relies on two mutually independent events; the first event occurs when no traffic is generated and the subsequent event refers to when the buffer is empty. Hence, $\mathrm{P_{idle}}$ is expressed as product of probability of no data arrival $(1-\mathrm{S})$ and probability of empty buffer $(1-\zeta)$. Therefore, the refined power consumption model gives \cite{Fahad_EEE_2022}
\begin{equation} \label{refined_Pt}
\mathrm{P_{t}(\Phi)}= \vartheta  \Phi -\left ({\vartheta \Phi -\xi_{\mathrm{idl}} }\right)(1-\mathrm{S}) \left ({1-\frac {\mathrm{\overline r_{max}}}{\mathrm{R}} }\right)+\mathrm{P_{c}},
\end{equation}
where $\zeta=\frac {\mathrm{\overline r_{max}}}{\mathrm{R}}$ and $\xi_{\mathrm{idl}}$ is the constant circuit power consumption. In what follows, \eqref{ObjFunction} becomes
\begin{align}\label{ObjFunction_updated}
\mathrm{mEEE}(\Phi)
&=\frac{-\frac{1}{n \theta}\text{ln}\!\left[\!1\!-\!\exp\!\left(\!-\frac{\mathrm{\Delta T}}{\mathrm{MTTFF}} \! \right)\!\bigg(\!1\!-\!\exp\!\left(\!-\theta n \mathrm{R}\right)\!\bigg)\! \right]}{\vartheta  \Phi -\left ({\vartheta \Phi -\xi_{\mathrm{idl}} }\right)(1-\mathrm{S}) \left (1-\frac {\mathrm{\overline r_{max}}}{\mathrm{R}} \right)+\mathrm{P_{c}}}.
\end{align}

Later, we will see that $\mathrm{mEEE}$ reaches a maximum at a certain $\Phi$ and then diminishes, meaning that operating in the high SNR region wastes power without improving efficiency. Therefore, maximization of $\mathrm{mEEE}$ is of great importance in IIoT to achieve maximum throughput per unit of power consumed. The mEEE metric is valuable for assessing the efficiency of the power consumption and energy savings of the IoT system. Therefore, we propose maximizing mEEE while adhering to the mission reliability constraint concerning $\Phi$.

Fig.~\ref{mEEEvsPhi} shows the mEEE results against $\Phi$ for the fixed values of $W=0.3$, and $m=2$. The results reveal that mEEE degrades at high SNRs. The higher number of ports enhances the system throughput compared to a single-port setup. Moreover, our system consumes less power and becomes more energy-efficient when a high number of ports are utilized, particularly at a low average SNR range. Furthermore, mEEE decreases after reaching the optimal point as the SNR increases. This illustrates how optimizing the mEEE concerning $\Phi$ can enhance system performance under various parameter settings. Therefore, we design the optimization problem of maximizing mEEE under mission reliability constraints.

\begin{figure} [!t] 
\centering 
\includegraphics[width=\columnwidth]{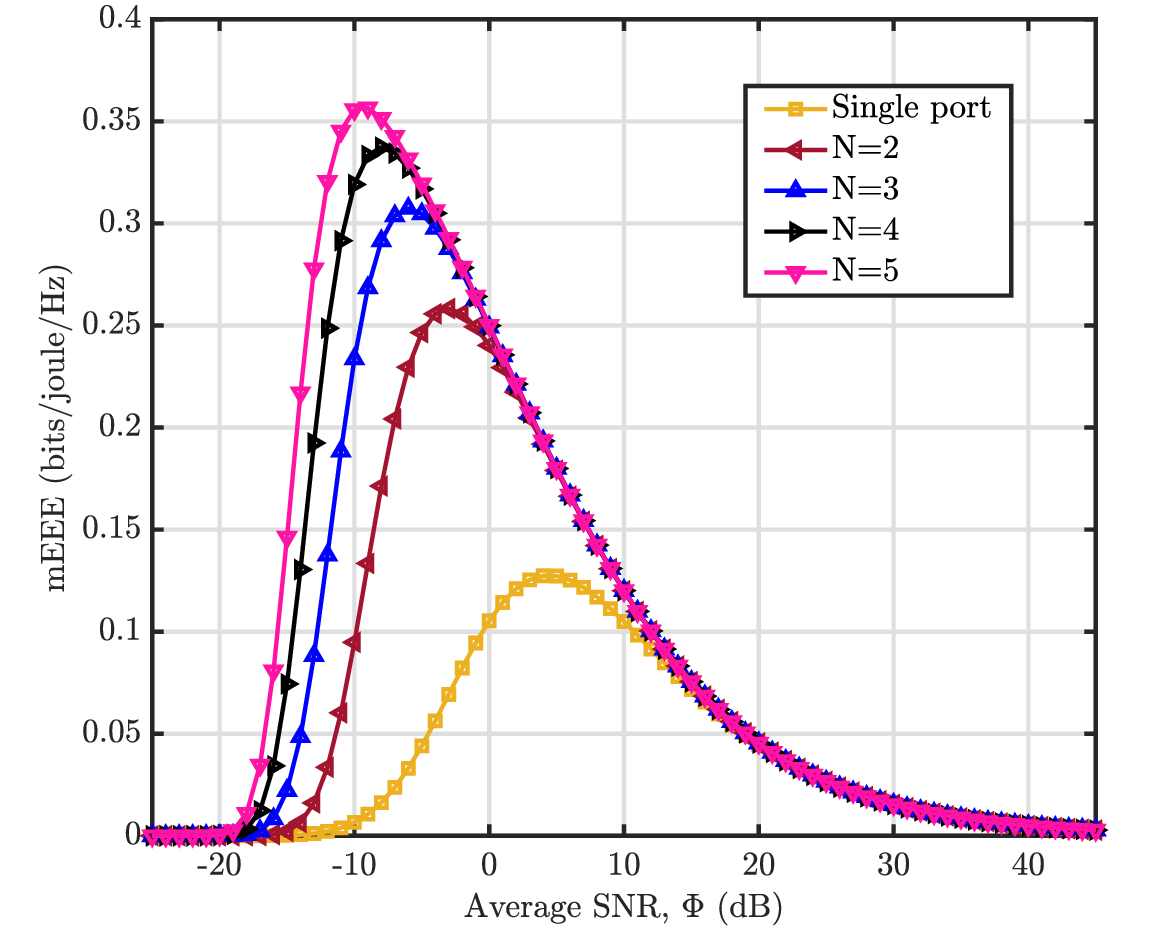}
\caption{The mEEE results as a function of $\Phi$ for different number of ports, $N$ assuming fixed $m$, $W$, and $\mathrm{\Delta T}$.}\label{mEEEvsPhi}
\end{figure}

\subsection{mEEE Maximization}
Ensuring the IIoT network's QoS and mEEE demands efficient resource utilization, such as rate and power transmission. Therefore, we focus on the power allocation strategy, and the design optimization problem is written as
\begin{equation}\label{Optimization_prob}
\arg\max_\Phi~\mathrm{mEEE}=\frac{\mathrm{mEC}(\Phi)}{\mathrm{P_{t}(\Phi)}}~~\mbox{s.t.}~~\mathrm{{R}_{M}(\Phi)}\geq \omega.
\end{equation}

Fig.~\ref{mEEEvsPhi} illustrates the quasi-concavity of~\eqref{ObjFunction_updated}, and that the mission reliability constraint is neither concave nor convex, which makes the optimization problem non-convex~\cite{boyd2004convex}. Such optimization problems are intricate, and finding a closed-form solution is intractable. Fortunately, fractional programming, specifically the popular Dinkelbach algorithm, has been extensively used to tackle these optimization problems. 

For this reason, we reformulate the single-ratio problem~\eqref{Optimization_prob} using Dinkelbach's transform as
\begin{equation}\label{Dinkel_Obj}
\arg\max_\Phi~\mathrm{mEC}(\Phi)- \kappa \mathrm{P_{t}(\Phi)}~~\mbox{s.t.}~~\mathrm{{R}_{M}(\Phi)}\geq \omega,
\end{equation}
where $\kappa$  is an auxiliary variable, and updated iteratively by 
\begin{equation}\label{iterative_update}
\kappa(i+1)= \frac{\mathrm{m{EC}}(\Phi[i])}{\mathrm{P_{t}}(\Phi[i])}, 
\end{equation}
where $i$ represents the iteration index. Convergence can be guaranteed by updating $\kappa$ alternatively according to~\eqref{iterative_update} for $\Phi$ in~\eqref{Dinkel_Obj}~\cite{shen_DK}. Dinkelbach's method depends on solving an inner loop optimization problem at each iteration~\cite{FP}. Therefore, we solve the inner loop by the golden search method to improve the convergence of Dinkelbach's method as in~\cite{Fahad_EEE_2022}.

 \begin{algorithm}[!t] \label{sc:alog1}
\caption{Modified Dinkelbach's (DK) Algorithm }
\label{CHalgorithm1}
\begin{algorithmic}[1]
\State $\mathbf{Initialization:}$ $lb$, $ub$ are the lower and upper-bounds, tolerance $\tau_1$,$\tau_2$, $q \leftarrow 0$, $i \leftarrow 0$, optimized $\leftarrow$ false

\State $\mathbf{Define:} {f(x,q)=\mathrm{\mathbf{P4}}}$
\While{(optimized $\leftarrow$ false and max-iterations )}

\State {Compute} $x_1=ub-(ub-lb)*0.618$
\State {Compute} $x_2=lb+(ub-lb)*0.618$
\State {Compute} $g_1=f(x_1,q)$
\State {Compute} $g_2=f(x_2,q)$
\While{( error $\geq$ $\tau_1$ )}
\If{$g_1 > g_2 $ }
\State {$ub \leftarrow x_2; x_2 \leftarrow x_1; g_2 \leftarrow g_1$}
\State {Compute} $x_1=ub-(ub-lb)*0.618$
\State {Compute} $g_1=f(x_1,q)$

\ElsIf{$g_1 < g_2$ }
\State {$lb \leftarrow x_1; x_1 \leftarrow x_2; g_1 \leftarrow g_2$}
\State {Compute} $x_2=lb-(ub-lb)*0.618$
\State {Compute} $g_2=f(x_2,q)$
\EndIf
\State {Compute} error $=2*\frac{ub-lb}{ub+lb}$
\EndWhile
\State {Compute} $x =\frac{x_1+x_2}{2}$ and reset $lb$, $ub$
\If{$f(x,q)=0 $}
\State {$x^* \leftarrow x$}
\State {optimized $\leftarrow$ true}
\ElsIf{$f(x,q)\leq$ $\tau_2$ }
\State {$x^* \leftarrow x$}
\State {optimized $\leftarrow$ true}

\Else
\State {update, $\ q \leftarrow \frac{f_{1}(x)}{f_2(x)}$}
\State {update, \ $ i \leftarrow i+1$}
\EndIf
\EndWhile
\end{algorithmic}
\end{algorithm}

\section{Numerical Results}\label{sc:numerical Analysis}
In this section, we provide the numerical results of mEEE for our system consisting of an FAS-aided receiver with Nakagami fading in the FBL regime by varying its different parameters such as $\theta$, $W$, $\Phi$ and $\mathrm{\Delta T}$. We consider unit power channels, i.e., $\mathbb{E}[|h_k|^2]=\sigma^2=1$ for $k=1,\dots,N$, where $N$ is the number of ports in the FAS. Unless stated otherwise, the remaining system parameters are fixed as $\mathrm{P_{c}}= 0.2$, $\mathrm{P_{idle}}=0.03$, $\eta_{\Delta}=10^{-4}$, $n=1000$ channel uses, $\vartheta = 0.2$, $\theta= 10^{-3}$, $S=0.5$, $\mathrm{R}=0.1$ bcpu and $\epsilon=10^{-2}$ throughout the simulations. The envelope threshold is set to $\rho$.

Fig.~\ref{NLCR} shows that our analytical expression for the normalized LCR (NLCR), i.e., LCR/$f_D$, is corroborated by extensive Monte Carlo simulations, indicating its correctness. In this figure, we demonstrate the NLCR against the mean SNR $\Phi$ for different variations of $N$ and $W$ for $m=1$, and $\mathrm{R}=1$. Note that the values considered are only for illustrative purposes. The figure validates our assertion that LCR is not independent of $\mu$ and $N$, as illustrated in~\cite[(7)]{I22_wong2020perflim}. Moreover, it is evident that the NLCR decreases with an increase in average SNR, and the choice of $W$ also has an impact on the NLCR.
 
 \begin{figure} [!t] 
\centering \includegraphics[width=\columnwidth]{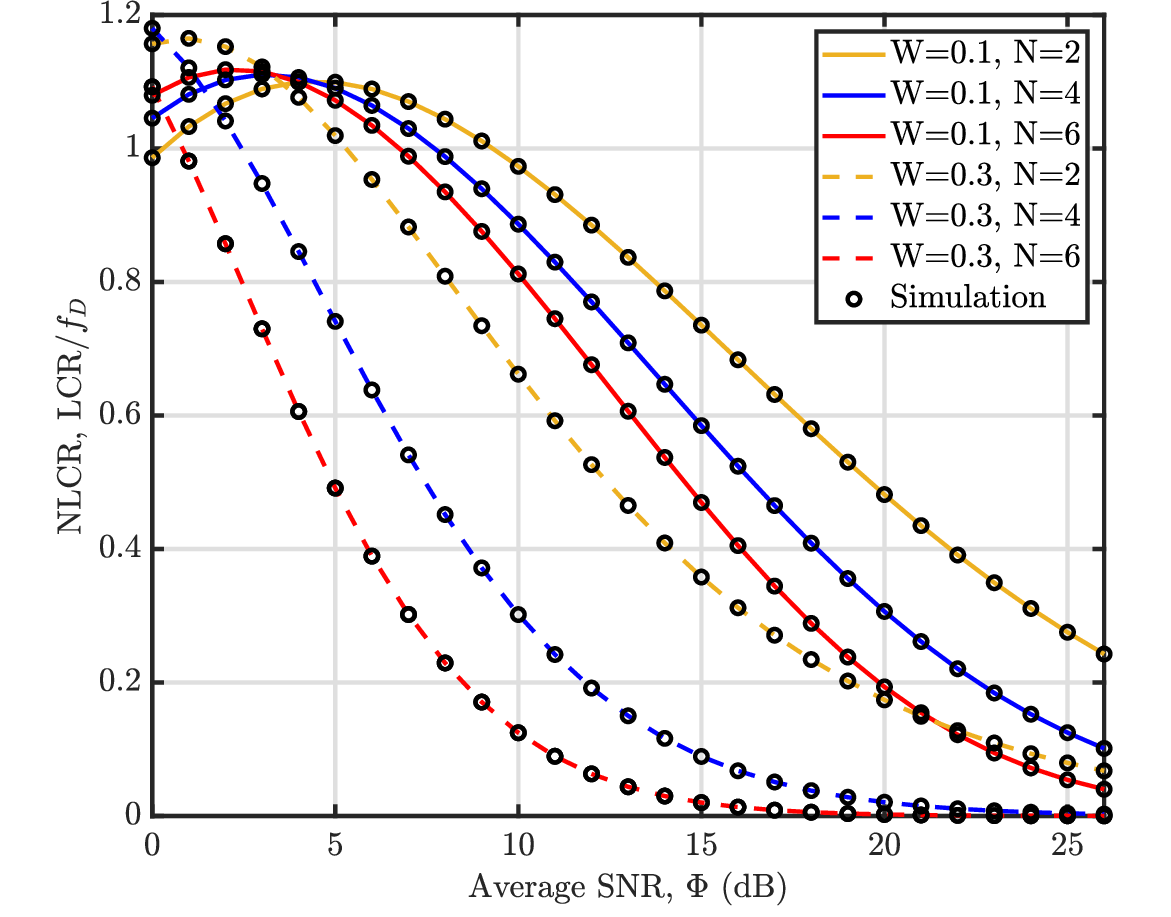}
\caption{The NLCR results against the average SNR, $\Phi$, with $m=1$ (Rayleigh fading) for different configurations of $N$ and $W$.}\label{NLCR}
\end{figure}

In Fig.~\ref{mReliabilityVsMissionDuration}, we assess the trade-off between mission reliability $\mathrm{R_{M}}(\mathrm{\Delta T})$ and mission duration $\mathrm{\Delta T}$. As we can see, mission reliability declines with increasing mission duration due to the requirement to maintain high reliability consistently over a longer period. Higher mission reliability can be achieved with a larger value of $W$. Note that we must compromise the mission reliability to have a failure-free operation for a longer mission duration. Moreover, a larger number of ports can enhance mission reliability even with a long mission duration. Next, we evaluate the optimized mEEE against mission duration, delay component, and target mission reliability. 

\begin{figure} [!t] 
\centering 
\includegraphics[width=\columnwidth]{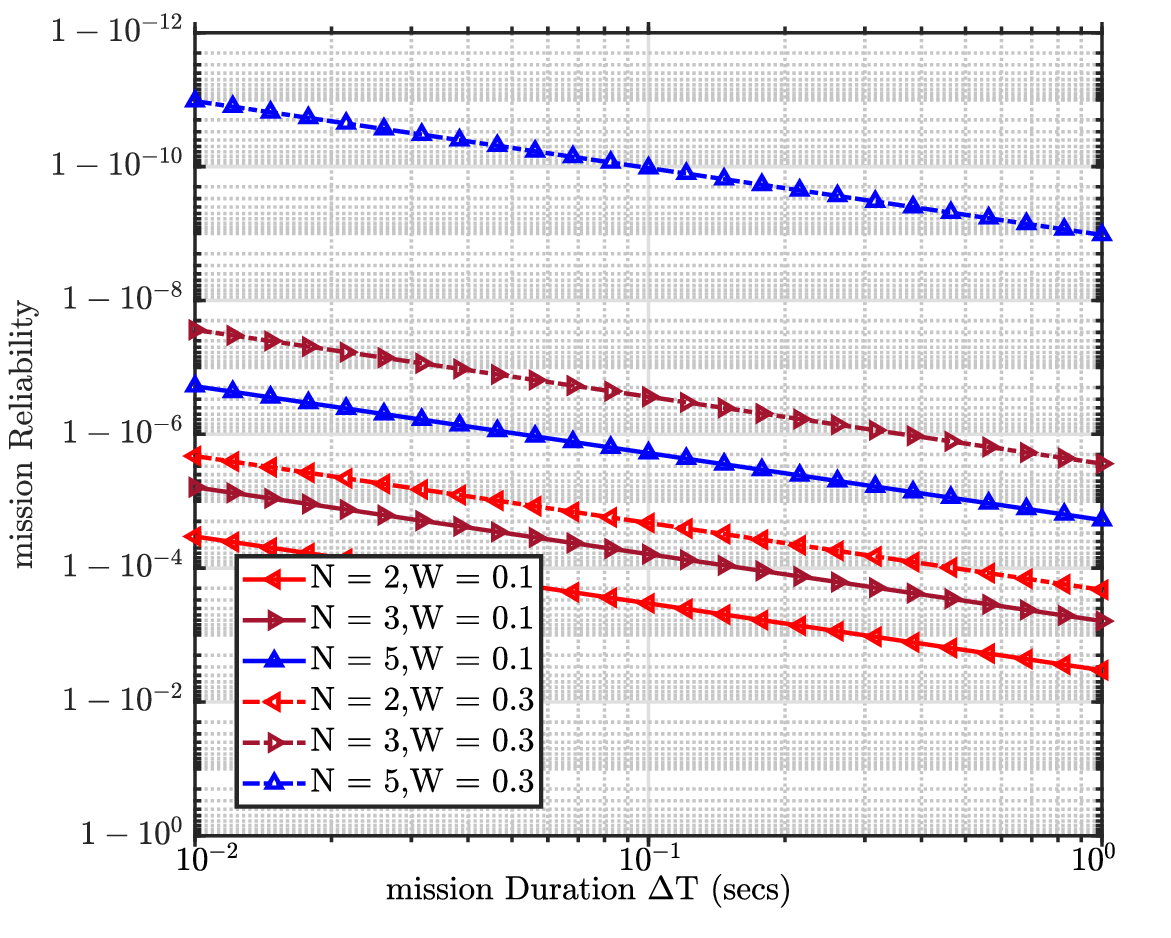}
\caption{Trade-off between $\mathrm{\Delta T}$ and $\mathrm{{R}_{M}(\Delta T)}$ under FBL for different configurations of $W$, and $N$ for $m=2$ (which corresponds to the cases with some line-of-sight (LoS) component in the channel).}\label{mReliabilityVsMissionDuration}
\end{figure}

Fig.~\ref{OptimzedmEEEVsMissionDuration} illustrates the optimized mEEE against the mission duration $\mathrm{\Delta T}$ for $m=5$, and $W=0.03$. We shed light on how mEEE varies with $\mathrm{\Delta T}$ if the target mission reliability and the QoS constraint are set to $99.99\%$ and $10^{-3}$ under optimal $\Phi$. The mEEE reduces with the enhancement of mission duration due to the fading nature of the wireless channel, since a persistent failure-free operation is difficult to maintain in a fading channel. The environment with less ports consumes more energy, making the system less efficient in terms of energy under a longer mission duration. The higher number of ports is always favourable regarding energy efficiency, ultra-reliability, and longer mission duration. It indicates that it is paramount for IIoT applications, such as long-term industrial monitoring or control systems, to consider the environment in which they operate while utilizing energy efficiently under longer mission durations, along with ultra-reliability.

\begin{figure} [!t] 
\centering 
\includegraphics[width=\columnwidth]{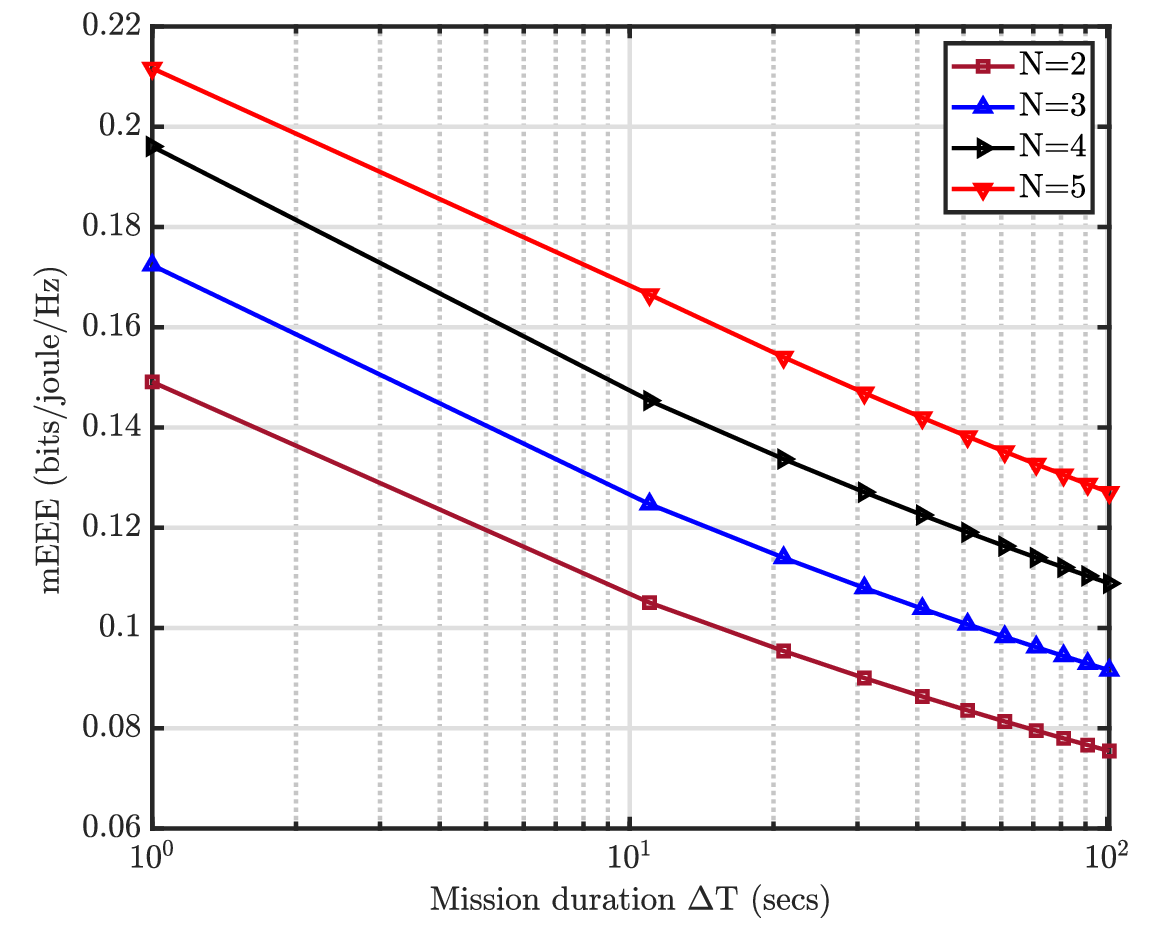}
\caption{The optimized mEEE against $\mathrm{\Delta T}$ with $W=0.03$ and $m=5$.}\label{OptimzedmEEEVsMissionDuration}
\end{figure}

In Fig.~\ref{OptimzedmEEEVsTheta}, we study the impact of QoS constraint $\theta$ on the optimized mEEE over Nakagami fading in the FBL regime. The mission reliability indicator $\omega$ is set to $99.99\%$ and mission duration $\mathrm{\Delta T}=5$, which guarantees the throughout $99.99\%$ reliable communication within a specific mission duration. We see that all $N$ divulge the decrement in mEEE under strict buffer constraint. This is because the larger $\theta$ corresponds to stringent constraints, and smaller $\theta$ indicates looser constraints, which implies the system can tolerate larger delays. The system needs to be highly energy-efficient with strict delay requirements, but it becomes more energy-efficient with a loose QoS constraint. This trade-off indicates that the system should not be operated with stringent QoS requirements.

\begin{figure} [!t] 
\centering 
\includegraphics[width=\columnwidth]{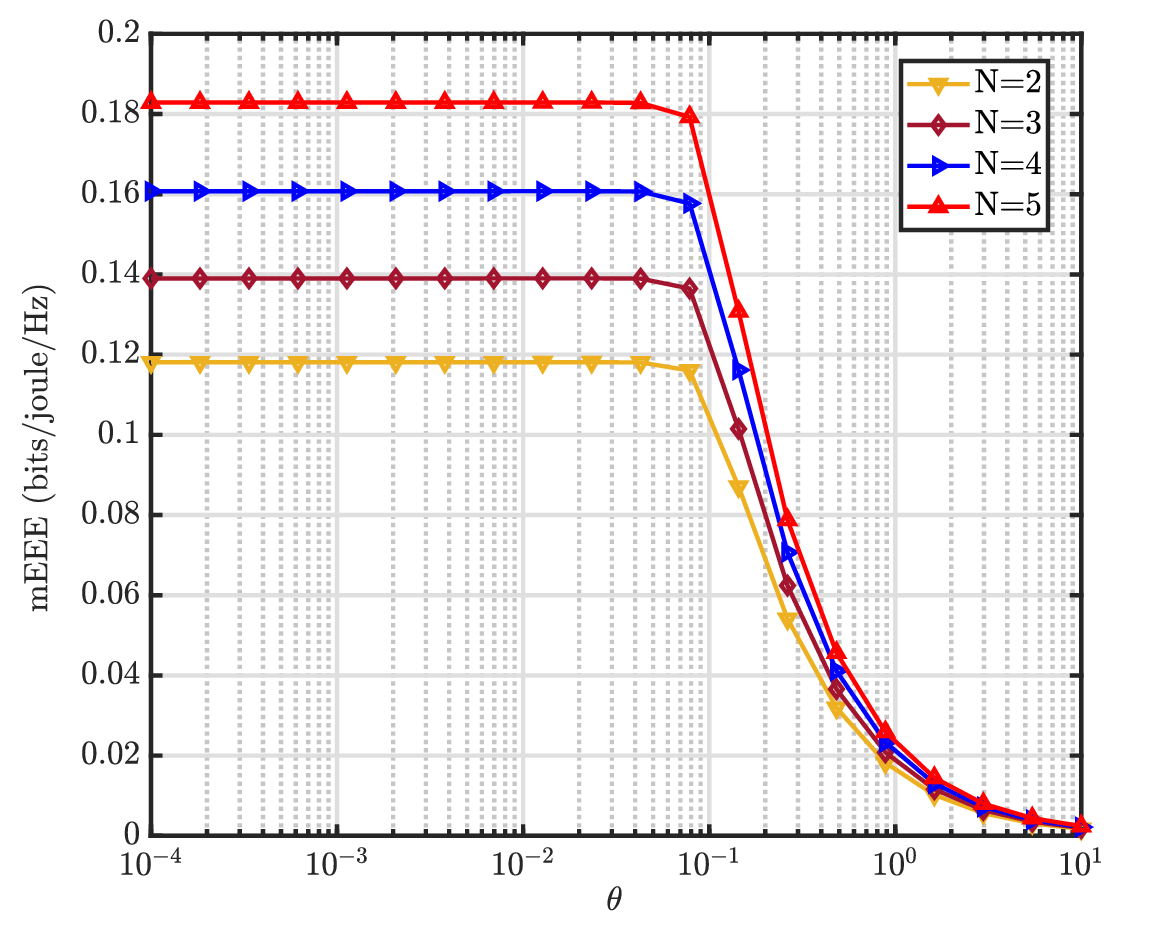}
\caption{The optimized mEEE against $\mathrm{\Delta T}$ with $W=0.03$ and $m=5$.}\label{OptimzedmEEEVsTheta}
\end{figure}

Fig.~\ref{OptimzedmEEEvsMissionReliability} delineates the optimized mEEE versus target mission reliability. We set the values of $\theta = 10^{-3}$, and $\mathrm{\Delta T} = 5$. The possibility of achieving high mEEE in an ultra-reliable region, i.e., $\sigma > 99.999\%$, is quite low for a single-port system. Generally, systems operating in environments with the highest number of ports tend to have higher mEEE, particularly at high target mission reliability. These insights suggest that system designers should consider that achieving high mission reliability requires energy, which in turn reduces energy efficiency. Hence, industrial applications such as safety-critical systems should be designed with optimized energy to balance the trade-off between mEEE and mission reliability. 

\begin{figure} [!t] 
\centering 
\includegraphics[width=\columnwidth]{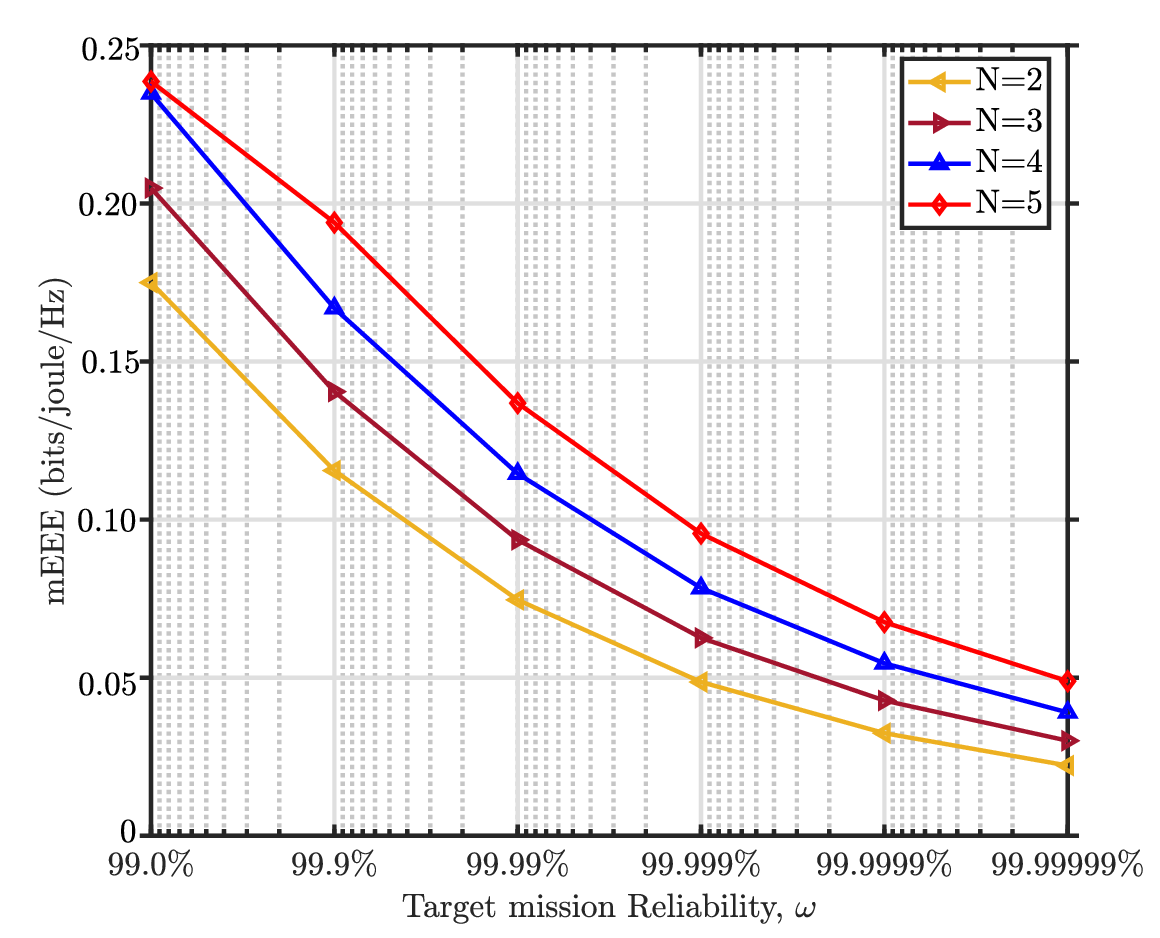}
\caption{The optimized mEEE against $\mathrm{{R}_{M}}$ with $W=0.03$ and $m=4$.}\label{OptimzedmEEEvsMissionReliability}
\end{figure}

\section{Conclusion}\label{sc:conclusion}%
We presented a comprehensive dependability-based analysis of FAS tailored to mission-critical IIoT applications. By deriving closed-form LCR and AFD expressions for an $N$-port FAS under Nakagami-$m$ fading, we revised mission reliability and MTTFF. We extended the EC into the FBL regime, allowing us to define mEC that inherently guarantees failure-free operation over a specified duration. Incorporating realistic traffic models and a refined power-consumption model, we formulated the mEEE metric and developed an efficient modified Dinkelbach algorithm for its maximization. Numerical results illustrated the fundamental trade-offs between reliability, latency, and energy efficiency, demonstrating that increased port diversity and careful SNR optimization are critical to sustaining high mEEE under stringent URLLC requirements. 

\appendices
\section{Proof of Theorem \ref{theo1}}\label{app1}
In this regard, the $N$-variate joint PDF $p_{|\dot{h}|,|h|}(\dot{x},x)$ is given by \cite[(8.42)]{jpdf}
\begin{align}  \label{alou}
p_{|\dot{h}|,|h|}(\dot{x},x)&=\sum_{i=1}^N p_{|\dot{h_i}|}(\dot{x}) \nonumber \\
& \!\! \times \underbrace{\int_0^x\!\!\cdots\!\!\int_0^x}_{(N-1)-{\rm fold}} \!\!\!\! p_{|h_{1}|,\dots,|h_{N}|}(x_1,\dots,x_i=x,\dots,x_N) \nonumber \\
& \times \underbrace{dx_1 \cdots dx_k\cdots dx_N}_{\substack{(N-1)-{\rm fold} \\ k \neq i}},
\end{align}
where $|\dot{h}_i|$ is the time derivative of the signal envelope at the $i$-th port. Hence, from \eqref{lcrdef}, we obtain the LCR as
\begin{align}  \label{step1}
L(x_{\rm th})&=\int_0^{\infty}\dot{x}p_{|\dot{h}|,|h|}(\dot{x},x)d\dot{x} =\int_0^{\infty}\!\!\!\!\dot{x} \sum_{i=1}^N p_{|\dot{h_i}|}(\dot{x})\!\! \nonumber \\
& \times \underbrace{\int_0^{x_{\rm th}}\!\!\!\!\cdots\!\!\int_0^{x_{\rm th}}}_{(N-1)-{\rm fold}}\!\!\!\! \!\!p_{|h_{1}|,\dots,|h_{N}|}(x_1,\dots,x_i=x_{\rm th},\dots,x_N) \nonumber \\
& \times \underbrace{dx_1 \cdots dx_k\cdots dx_N}_{\substack{(N-1)-{\rm fold} \\ k \neq i}}d\dot{x}.
\end{align}
Moreover, in the case of an identically distributed Nakagami-$m$ fading scenario, $p_{|\dot{h_i}|}(\dot{x})$ for $i=1,\dots,N,$ follows a zero mean Gaussian PDF with variance {$\sigma_{\dot{X}}^2=\pi^2 \frac{\sigma^2}{m} f_D^2$, where $f_D$ is the maximum Doppler frequency \cite{stuber}}. Thus, we obtain
\begin{equation}  \label{pdot}
\int_0^{\infty}\!\!\dot{x}p_{|\dot{h_i}|}(\dot{x})d\dot{x}=\frac{\sigma_{\dot{X}}}{\sqrt{2\pi}}=\sqrt{\frac{\pi}{2m}}\sigma  f_D,~i=1,\dots,N.
\end{equation}
By combining \eqref{step1} and \eqref{pdot}, we get
\begin{align}  \label{step2}
&L(x_{\rm th})=\sqrt{\frac{\pi}{2m}}\sigma f_D \nonumber \\
& \times  \sum_{i=1}^N \underbrace{\int_0^{x_{\rm th}}\!\!\cdots\int_0^{x_{\rm th}}}_{(N-1)-{\rm fold}}p_{|h_{1}|,\dots,|h_{N}|}(x_1,\dots,x_i=x_{\rm th},\dots,x_N) \nonumber \\
& \times  \underbrace{dx_1 \cdots dx_k\cdots dx_N}_{\substack{(N-1)-{\rm fold} \\ k \neq i}}.
\end{align}
The summation term in \eqref{step2} is alternatively written as
\begin{align}  \label{altr}
&\sum_{i=1}^N \underbrace{\int_0^{x_{\rm th}}\!\!\cdots\int_0^{x_{\rm th}}}_{(N-1)-{\rm fold}}p_{|h_{1}|,\dots,|h_{N}|}(x_1,\dots,x_i=x_{\rm th},\dots,x_N) \nonumber \\
& \quad \times \underbrace{dx_1 \cdots dx_k\cdots dx_N}_{\substack{(N-1)-{\rm fold} \\ k \neq i}} \nonumber \\
&= \underbrace{\int_0^{x_{\rm th}}\!\!\cdots\int_0^{x_{\rm th}}}_{(N-1)-{\rm fold}}p_{|h_{1}|,\dots,|h_{N}|}(x_1=x_{\rm th},\dots,x_N) \underbrace{dx_2 \cdots dx_N}_{(N-1)-{\rm fold}} \nonumber \\
& \quad + \sum_{i=2}^N \underbrace{\int_0^{x_{\rm th}}\!\!\cdots\!\!\int_0^{x_{\rm th}}}_{(N-1)-{\rm fold}}\!\!p_{|h_{1}|,\dots,|h_{N}|}(x_1,\dots,x_i=x_{\rm th},\dots,x_N) \nonumber \\
& \quad \times \underbrace{dx_1 \cdots dx_k\cdots dx_N}_{\substack{(N-1)-{\rm fold} \\ k \neq i}}.
\end{align}
The first term of \eqref{altr} is evaluated as
\begin{align}  \label{1st}
&\underbrace{\int_0^{x_{\rm th}}\cdots\int_0^{x_{\rm th}}}_{(N-1)-{\rm fold}}p_{|h_{1}|,\dots,|h_{N}|}(x_1=x_{\rm th},\dots,x_N) \underbrace{dx_2 \cdots dx_N}_{(N-1)-{\rm fold}} \nonumber \\
&\overset{(a)}{=}\frac{{2 x_{\rm th}^{2m - 1}{m^m}} {\exp\!\left(\! - \frac{{x_{\rm th}^2m}}{{\sigma^2}}\!\right)  } }{{\Gamma (m)\sigma^{2m}}} \nonumber \\
& \times \int_0^{x_{\rm th}}\!..\int_0^{x_{\rm th}}\!\prod _{k=2}^N
{\frac{{2mx_{\rm th}^{1 - m}x_k^m\sigma^{m - 1}}}{{\sigma^{m + 1}\left( {1 - \mu _k^2} \right)\mu _k^{m - 1}}}} \nonumber \\
& \times \!\exp\!\!\left(\! - \frac{{m\sigma^2 x_k^2 + m\sigma^2 x_{\rm th}^2\mu _k^2}}{{\sigma^4\left( {1 - \mu _k^2} \right)}}\!\right)\nonumber \\
& \times I_{m - 1}\!\!\left[ {\frac{{2m{\mu _k}{x_{\rm th}}{x_k}}}{{{\sigma^2}\left( {1 - \mu _k^2} \right)}}} \!\right]\!\! dx_2 \cdots dx_N \nonumber \\
& =\frac{{2x_{\rm th}^{2m-1}{m^m}}{\exp\!\left( - \frac{{x_{\rm th}^2m}}{{\sigma^2}}  \right)   }}{{\Gamma (m)\sigma^{2m}}}\! \prod _{k=2}^N \int_0^{x_{\rm th}}\!\!{\frac{{2mx_{\rm th}^{1 - m}x_k^m\sigma^{m - 1}}}{{\sigma^{m + 1}\left( {1 - \mu _k^2} \right)\mu _k^{m - 1}}}} \nonumber \\
& \quad \times{\exp\!\!\left(\! - \frac{{m\sigma^2 x_k^2 + m\sigma _k^2x_{\rm th}^2\mu _k^2}}{{\sigma^4\left( {1 - \mu _k^2} \right)}}  \right)    }  {I_{m - 1}}\left[ {\frac{{2m{\mu _k}{x_{\rm th}}{x_k}}}{{{\sigma^2}\left( {1 - \mu _k^2} \right)}}} \right] dx_k \nonumber \\
&\overset{(b)}{=}\!\!\frac{{2x_{\rm th}^{2m-\! 1}} {\exp\!\left(\! - \frac{{x_{\rm th}^2m}}{{\sigma^2}}\! \right)  }  }{{{m^{-m}}\Gamma (m)\sigma^{2m}}}\nonumber \\
& \quad \times\prod _{k=2}^N \!\!{\left[\! {1\!\!-\!{Q_{\text{m}}}\!\!\left(\!\! {\sqrt{\!\frac{{2m\mu _k^2x_{\rm th}^2}}{{\sigma^2\left( {1 - \mu _k^2} \right)}}} \!,\!
\sqrt{\!\frac{{2m} x_{\rm th}^2}{{\sigma^2\left( {1 - \mu _k^2}\right)}}}  } \right) }\!\!\right]},
\end{align}
where $(a)$ follows from \eqref{pdf_FAS_Nakagami} and $(b)$ follows from \cite[(10)]{marcum}. Hereafter, the second term of \eqref{altr} is expanded as \eqref{2nd} (see top of next page), where $(c)$ is based on \cite[(10)]{marcum} and the fact that the joint PDF in \eqref{pdf_FAS_Nakagami} is not a regular multivariate Nakagami-$m$ PDF. This distribution is a product of $N$ pairs of bivariate Nakagami-$m$ PDFs, with the first port of the FAS serving as the reference point for all the remaining  ports. By combining \eqref{step2}--\eqref{2nd}, we obtain \eqref{lcrp}, which completes the proof.
\begin{figure*}[h]
\begin{align}  \label{2nd}
& \sum_{i=2}^N \underbrace{\int_0^{x_{\rm th}}\!\!\cdots\int_0^{x_{\rm th}}}_{(N-1)-{\rm fold}}p_{|h_{1}|,\dots,|h_{N}|}(x_1,\dots,x_i=x_{\rm th},\dots,x_N) \underbrace{dx_1 \cdots dx_k\cdots dx_N}_{\substack{(N-1)-{\rm fold} \\ k \neq i}} \nonumber \\
&=\!\! \sum_{i=2}^N\!\int_0^{x_{\rm th}}\!\!\cdots\!\!\int_0^{x_{\rm th}} \frac{2mx_1^{1 - m}x_{\rm th}^m\sigma^{m - 1} \exp\!\!
\left(\!\! - \frac{m\sigma^2 x_{\rm th}^2 + m\sigma^2 x_1^2\mu _k^2}{\sigma^4\left( {1 - \mu _i^2} \right)}\!\!\right)\!\!  }{\sigma^{m+1}\left( 1 - \mu _i^2 \right)\mu _i^{m - 1}}  I_{m-1}\!\!\left[\! \frac{2m \mu_k x_1 x_{\rm th}}{\sigma^2\left( 1-\mu_i^2 \right)}\! \right]\!\!\prod _{\substack{k=1 \\ k \neq i}}^N \frac{2mx_1^{1 - m}x_k^m\sigma^{m - 1}}{\sigma^{m+1}\left( 1 - \mu _k^2 \right)\mu _k^{m - 1}}   \nonumber \\
& \times \exp\!\left( - \frac{m\sigma^2 x_k^2 + m\sigma^2 x_1^2\mu _k^2}{\sigma^4\left( {1 - \mu _k^2} \right)}\right) I_{m-1} \left[ \frac{2m \mu_k x_1 x_k }{\sigma^2\left( 1-\mu_k^2 \right)} \right]  dx_1.. dx_k.. dx_N  \nonumber \\
&\overset{(c)}{=} \! \sum_{i=2}^N \frac{2m x_{\rm th}^m\sigma^{m - 1}}{\sigma^{m+1}\left( 1 - \mu _i^2 \right)\mu _i^{m - 1}} \exp\!\!\left(\!\!- \frac{mx_{\rm th}^2 }{\sigma^2\left( {1 - \mu _i^2} \right)}\right) \!\!\! \int_0^{x_{\rm th}} \! x_1^{1 - m} \frac{2x_1^{2m - 1} m^m}{\Gamma (m)\sigma^{2m}} \exp\!\!\left(\!\!- \frac{mx_1^2}{\sigma^2}\right) \exp\!\!\left(\! - \frac{m x_1^2\mu _i^2}{\sigma^2\left( {1 - \mu _i^2} \right)}\right)I_{m-1}\left[ \frac{2m \mu_i x_1 x_{\rm th}}{\sigma^2\left( 1-\mu_i^2 \right)} \right] \nonumber \\
& \times \prod _{\substack{k=2 \\ k \neq i}}^N \int_0^{x_{\rm th}} \frac{2mx_1^{1 - m}x_k^m\sigma^{m - 1}}{\sigma^{m+1}\left( 1 - \mu _k^2 \right)\mu _k^{m - 1}}   \exp\!\!\left(\!\! - \frac{m\sigma^2 x_k^2 + m\sigma^2 x_1^2\mu _k^2}{\sigma^4\left( {1 - \mu _k^2} \right)}\right) I_{m-1} \left[ \frac{2m \mu_k x_1 x_k}{\sigma^2\left( 1-\mu_k^2 \right)} \right]dx_1 \nonumber \\
&=\sum_{i=2}^N \frac{2m x_{\rm th}^m\sigma^{m - 1}}{\sigma^{m+1}\left( 1 - \mu _i^2 \right)\mu _i^{m - 1}} \exp\!\!\left(\!\! - \frac{mx_{\rm th}^2 }{\sigma^2\left( {1 - \mu _i^2} \right)}\right) \int_0^{x_{\rm th}}\frac{2x_1^{3m - 2} m^m}{\Gamma (m)\sigma^{2m}}  \exp\!\!\left(\!\! - \frac{m x_1^2}{\sigma^2\left( {1 - \mu _i^2} \right)}\right) I_{m-1}\left[ \frac{2m \mu_i x_1 x_{\rm th}}{\sigma^2\left( 1-\mu_i^2 \right)} \right] \nonumber \\
& \times  \prod _{\substack{k=2 \\ k \neq i}}^N {\left[ {1 - {Q_{\text{m}}}\left( {\sqrt {\frac{{2m\mu _k^2 }}{{\sigma^2\left( {1 - \mu _k^2} \right)}}}x_1 ,\sqrt {\frac{{2m}}{{\sigma^2\left( {1 - \mu _k^2} \right)}}} x_{\rm th}} \right)} \right]} dx_1
\end{align}
\hrule
\end{figure*}

\section{Proof of Corollary \ref{mu1}}\label{appmu1}
The case of $\mu_k=1~\forall k$ implies identical channels at all the ports. As a result, we obtain
\begin{align}
L(x_{\rm th})&=\int_0^{\infty} \!\! \dot{x}p_{|\dot{h}|,|h|}(\dot{x},x_{\rm th})d\dot{x} \nonumber \\
&\overset{(a)}{=}p_{|h|}(x_{\rm th})\!\! \int_0^{\infty} \dot{x}p_{|\dot{h}|}(\dot{x})d\dot{x} \nonumber \\
&=\frac{2m^{m} {x_{\rm th}}^{2m-1}}{\Gamma(m)\sigma^2}   \exp\left(-\frac{m x_{\rm th}^2}{\sigma^2}\right)   \!\!\int_0^{\infty}\!\!\dot{x}p_{|\dot{h_i}|}(\dot{x})d\dot{x} \nonumber \\
&\overset{(b)}{=}\frac{\sqrt{2\pi} f_D m^{m-\frac{1}{2}}{x_{\rm th}}^{2m-1}  \exp\left(-\frac{m x_{\rm th}^2}{\sigma^2}\right)  }{\Gamma(m)\sigma^{2m-1}},
\end{align}
\noindent where $(a)$ follows from $p_{|\dot{h}|,|h|}(\dot{x},x)=p_{|\dot{h}|}(\dot{x})p_{|h|}(x)$ \cite[(2.97)]{stuber} and $(b)$ follows from \eqref{pdot}.

\section{Proof of Corollary \ref{corr2}}  \label{app2}
Since we are considering a two-port scenario, we take into account the joint PDF of the channel at these two ports. Hence, by replacing $N=2$ in \eqref{pdf_FAS_Nakagami}, the joint PDF becomes
\begin{align}  \label{pdf2var_nakagami} 
{p_{\left| {{h_1}} \right|,\left| {{h_2}} \right|}}&\left( {{x_1},{x_2}} \right) = \frac{{4{m^{m + 1}}{{\left( {{x_1}{x_2}} \right)}^m}{\exp\left( - \frac{m}{{1 - \mu _2^2}} \left( {\frac{{x_1^2}}{{\sigma^2}} + \frac{{x_2^2}}{{\sigma^2}}} \right) \right)   }}}{{\Gamma (m)\sigma^4\left( {1 - \mu _2^2} \right){{\left( {  {\sigma^2}   {\mu _2}} \right)}^{m - 1}}}} \nonumber \\
& \!\!\!\!\!\!\!\!\!\!\! \times {I_{m - 1}}\left( {\frac{{2m{\mu_2}{x_1}{x_2}}}{{\left( {1 - {u_2^2}} \right){\sigma^2}   }}} \right),~\mbox{for }x_1,x_2 \geq 0.
\end{align}
By replacing $N=2$ in~\eqref{lcrp} and after some trivial algebraic manipulations, we obtain
\begin{align}  \label{n21}
L(x_{\rm th})&=\sqrt{\frac{\pi}{2m}}\sigma f_D \left( \int_0^{x_{\rm th}} p_{|h_1|,|h_2|}(x_{\rm th},x_2)dx_2 \right. \nonumber \\
&\left. +\int_0^{x_{\rm th}} p_{|h_1|,|h_2|}(x_1,x_{\rm th})dx_1 \right) \nonumber \\
&\overset{(a)}{=}\sqrt{\frac{2\pi}{m}}\sigma f_D \!\! \int_0^{x_{\rm th}} p_{|h_1|,|h_2|}(x_{\rm th},x_2)dx_2 \nonumber \\
& = \frac{\sqrt{2\pi } 4 m^{m + 1/2} f_D x_{\rm th}^m \exp \left(\!\!-\frac {m x_{\rm th}^2}{\sigma^{2}(1-\mu_2^2)} \right)}{ \Gamma (m)\sigma^{2m+1}  (1-\mu_2^{2}) \mu_2^{m-1}}  \nonumber \\
& \times \!\int_0^{x_{\rm th}}\!\! x_2^m \exp \left(\!\!-\frac {m x_2^2}{\sigma^{2}(1-\mu_2^2)}\!\! \right) \!\!{I_{m - 1}}\left ( \!{\frac {2m\mu x_{\rm th}x_2}{\sigma^2(1-\mu_2^2)}}\!\right)\! dx_2, 
\end{align}
where $(a)$ follows from~\eqref{pdf2var_nakagami} and
\begin{align}  \label{n22}
&\int_0^{x_{\rm th}}\!\! x_2^m \! \exp \left(\!\!-\frac {m x_2^2}{\sigma^{2}(1-\mu_2^2)}\!\! \right) \!\!{I_{m - 1}}\left ( \!{\frac {2m\mu x_{\rm th}x_2}{\sigma^2(1-\mu_2^2)}}\!\right)\! dx_2 \nonumber \\
&\overset{(b)}{=}\!\! \!\! \int_0^{x_{\rm th}}\! \!\! x_2^m \! \exp\!\! \left(\!\!-\frac {m x_2^2}{\sigma^{2}(\!1-\!\mu_2^2)}\!\! \right) \nonumber \\
& \times \sum_{k=0}^{\infty} \frac{1}{k! \Gamma(m\!+\!k)}\!\!\left(\!\!\frac {m\mu x_{\rm th}x_2}{\sigma^2(1\!-\!\mu_2^2)} \!\!\right)^{\!\! 2k+m-1} \!\! dx_2 \nonumber \\
&\overset{(c)}{=} \sum_{k=0}^{\infty} \frac{1}{(k!)\Gamma(m+k)} \left(\frac {m\mu_2 x_{\rm th}}{\sigma^2(1-\mu_2^2)} \right)^{2k+m-1} \nonumber \\
& \times \int_0^{x_{\rm th}} \exp \left(-\frac {m x_2^2}{\sigma^{2}(1-\mu_2^2)} \right) x_2^{2k+2m-1} dx_2 \nonumber \\
&=\frac{1}{2}\sum_{k=0}^{\infty}\frac{\left(\mu_2 x_{\rm th}\right)^{2k+m-\!1}}{(k!)\Gamma(m+k) } \left(\frac{ m }{\sigma^2(1-\mu_2^2)}\right)^{k-1}\nonumber \\
& \times \gamma \left(k+m, \frac{m x_{\rm th}^2}{\sigma^2 (1-mu_2^2)}\right),
\end{align}
where $(b)$ follows from \cite[(8.445)]{grad} and by assuming $0<\mu<1$, $(c)$ follows from changing the order of summation and integration. As a result, by substituting \eqref{n22} into \eqref{n21}, we finally obtain \eqref{n2f}, which completes the proof.

\bibliographystyle{IEEEtran}


\end{document}